\newtheorem{theorem}{Theorem}
\newtheorem{proposition}[theorem]{Proposition}
\newtheorem{lemma}[theorem]{Lemma}
\newtheorem{corollary}[theorem]{Corollary}
\newtheorem{observation}[theorem]{Observation}
\theoremstyle{definition}
\newtheorem{definition}[theorem]{Definition}
\newtheorem{example}[theorem]{Example}
\newcommand{\Int}{\mathsf{Int}}
\newcommand{\E}{\mathop{\mathbb{E}}}
\newcommand{\CFG}{\textsf{CFG}}
\newcommand{\uCFG}{\textsf{uCFG}}
\newcommand{\Lset}{\mathcal{L}}
\renewcommand{\paragraph}[1]{\smallskip
	\noindent\textbf{#1}.}
\renewcommand{\setminus}{\,\,\backslash\,\,}
\title[A Lower Bound on Unambiguous Context Free Grammars]{A Lower Bound on Unambiguous Context Free Grammars via Communication Complexity}
\author{Stefan Mengel}
\affiliation{		
    \institution{Univ. Artois, CNRS, Centre de Recherche en Informatique de Lens (CRIL)}
    \city{Lens}
    \country{France}}
\email{mengel@cril.fr}
\author{Harry Vinall-Smeeth}
\affiliation{
    \institution{Technische Universit\"at Ilmenau}
    \city{Ilmenau}
    \country{Germany}
}
\email{harry.vinall-smeeth@tu-ilmenau.de}
\keywords{formal languages,
context free grammar,
ambiguity,
communication complexity,
factorised representations}
\begin{document}

\begin{abstract}
Motivated by recent connections to factorised databases, we analyse the efficiency of representations by context free grammars ({\CFG}s). Concretely, we prove a recent conjecture by Kimelfeld, Martens, and Niewerth (ICDT 2025), that for finite languages representations by general {\CFG}s can be doubly-exponentially smaller than those by unambiguous {\CFG}s. To do so, we show the first exponential lower bounds for representation by unambiguous {\CFG}s of a finite language that can efficiently be represented by ambiguous {\CFG}s. Our proof first reduces the problem to proving a lower bound in a non-standard model of communication complexity. Then, we argue similarly in spirit to a recent discrepancy argument to show the required communication complexity lower bound. Our result also implies that a finite language may admit an exponentially smaller representation as a nondeterministic finite automaton than as an unambiguous {\CFG}. 
\end{abstract} 	
	\maketitle

	\section{Introduction}\label{sec:introduction}

Over recent years, the succinct representation of query results has become a  prominent topic within database theory. This raises a fundamental question: how do different representation formats compare to one another? In this paper, we compare context free grammars ({\CFG}) to their unambiguous variant ({\uCFG}). In a nutshell, we show that enforcing unambiguity---which is useful for applications like counting and enumeration---comes at a heavy price in terms of succinctness. 

At this point some readers may wonder what {\CFG}s have to do with representing query answers. In fact, recent work by Kimelfeld, Martens and Niewerth \cite{DBLP:journals/corr/abs-2309-11663} observed a close connection between {\CFG}s and 
\emph{d-representations}---a form of factorised representations introduced over a decade ago by Olteanu and Z\'{a}vodn\'{y} in a landmark paper \cite{DBLP:journals/tods/OlteanuZ15}. To be precise, {\CFG}s representing finite languages are isomorphic to d-representations in the \emph{unnamed perspective}, see~\cite{DBLP:journals/corr/abs-2309-11663} for details. 

Let us highlight two nice features of d-representations. Firstly, given a query and a database we can often directly compute a d-representation which moreover, may be exponentially smaller than the materialisation of all query answers \cite{DBLP:journals/tods/OlteanuZ15}. And secondly, we can carry out many algorithms directly on d-representations \cite{DBLP:journals/pvldb/BakibayevKOZ13}. These representations, and close variants thereof, have been used in various settings e.g. databases under updates \cite{DBLP:conf/pods/BerkholzKS17, DBLP:conf/pods/Olteanu24}, representing provenance \cite{DBLP:conf/icdt/OlteanuZ12} and machine learning \cite{DBLP:conf/sigmod/SchleichOC16}.
More broadly, d-representations can be placed within the framework of circuits originating in
 \emph{knowledge compilation}, a sub-field of artificial intelligence concerning the efficient representation of various forms of knowledge and data~\cite{DBLP:journals/jair/DarwicheM02, DBLP:conf/aaai/Olteanu16}. Techniques from knowledge compilation have been widely deployed in database theory in recent years, see \cite{DBLP:journals/sigmod/AmarilliC24} for a survey. 
 Moreover, the circuits corresponding to {\uCFG}s have been used in the context of enumeration \cite{DBLP:conf/icalp/AmarilliBJM17}.

From an algorithmic perspective, {\uCFG}s offer several advantages over {\CFG}s. For example, it is well know that while counting is in polynomial time for {\uCFG}s, it is \#P-complete for {\CFG}s. Similarly, {\uCFG}s allow for more efficient enumeration, e.g. via \cite{DBLP:conf/icalp/AmarilliBJM17, DBLP:conf/icdt/MunozR22}. Different variants of {\CFG}s have also been used in information extraction~\cite{DBLP:conf/icdt/Peterfreund21,DBLP:conf/pods/AmarilliJMR22} and unambiguity is also crucual there. But are there finite languages which are accepted by a small {\CFG} and not by any small {\uCFG}? This is a question left open by Kimelfeld, Martens and Niewerth \cite{DBLP:journals/corr/abs-2309-11663}.
They conjecture a double exponential separation for {\CFG}s for a concrete language. However, they say `[t]o the best of our knowledge, the literature does not yet have
well-developed methods for proving size lower bounds for {\uCFG}s.' In this paper, we develop such
 methods and prove the conjecture from~\cite{DBLP:journals/corr/abs-2309-11663} via the following theorem. 

\begin{theorem} \label{thm:main}
For every $n\in \mathbb{N}$, there is finite language $L_n$ over a binary alphabet in which all words have length $2n$ such that:
\begin{enumerate}
\item $L_n$ is accepted by a {\CFG} of size $\Theta(\log(n))$,
\item $L_n$ is accepted by a non-deterministic finite automaton of size $\Theta(n)$ and
\item every ${\uCFG}$ accepting $L_n$ has size $2^{\Omega(n)}$.
\end{enumerate} 
\end{theorem}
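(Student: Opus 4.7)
My overall plan is to exhibit a concrete family $L_n \subseteq \{0,1\}^{2n}$, most naturally of the form $\{uv : u, v \in \{0,1\}^n,\ P(u,v)\}$ for a two-argument predicate $P$ that is simultaneously structurally recursive (to admit a small {\CFG} and NFA) and communication-complexity hard (so that a {\uCFG} lower bound can be extracted). Plausible candidates are set-disjointness, equality-style, or inner-product-like predicates on the two halves, possibly wrapped in an additional recursive scaffold to push the {\CFG} size down to $\Theta(\log n)$.

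For the two upper bounds, I would build the {\CFG} via a doubling construction: introduce nonterminals $A_0, \ldots, A_{\lceil \log n \rceil}$ with rules $A_i \to A_{i-1} A_{i-1}$ generating length-$2^i$ substrings, and encode the recursion of $P$ by a constant number of extra productions per level, yielding total size $\Theta(\log n)$. The NFA of size $\Theta(n)$ would scan $uv$ left-to-right, nondeterministically guessing auxiliary information about the first half and verifying it on the fly against the second half, using $O(n)$ states to track positions.

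The heart of the argument is the $2^{\Omega(n)}$ lower bound on {\uCFG} size. Following the communication-complexity route flagged in the abstract, I would first convert a hypothetical {\uCFG} $G$ for $L_n$ to Chomsky normal form (polynomial blow-up preserves unambiguity) and apply the standard balancing lemma: every parse tree of a length-$2n$ word has a unique deepest node whose yield has length in $[2n/3, 4n/3)$. For each $w \in L_n$, this node assigns a canonical tuple $(i, j, A, u, s, v)$ with $w = u \cdot s \cdot v$, $s = w[i{+}1..j] \in L(A)$, and $A$ the nonterminal labelling that node. Unambiguity of $G$ makes the map $w \mapsto (i, j, A, u, s, v)$ \emph{injective}, hence $|L_n| \leq \sum_{A, i, j} |U_{A,i,j}| \cdot |L(A) \cap \{0,1\}^{j-i}| \cdot |V_{A,i,j}|$, where each summand describes a three-factor rectangle of strings $u \cdot s \cdot v$. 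This is precisely the reduction to the non-standard communication model: a \emph{disjoint} covering of $L_n$ by three-factor rectangles, one per (nonterminal, split-position) triple.

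The final and technically hardest step is a discrepancy argument adapted to this three-factor setting: for the chosen $P$, show that every individual three-factor rectangle, intersected with $L_n$, can contain at most $|L_n| / 2^{\Omega(n)}$ elements, which—after summing over the $\mathrm{poly}(n)$ possible split positions—forces the number of nonterminals, and hence $|G|$, to be $2^{\Omega(n)}$. The main obstacle is that classical two-party rectangle discrepancy bounds do not directly handle a middle factor that is constrained to lie in $L(A)$ and whose length and position are variable; the recent discrepancy argument referenced in the abstract must be extended to jointly exploit the unambiguity-induced disjointness of the cover and the combinatorial pseudorandomness of $P$, and carrying out this extension cleanly is the technical core of the paper.
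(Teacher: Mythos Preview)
Your high-level plan matches the paper's: the language is the complement of set disjointness (words of length $2n$ with two $a$'s at distance $n$), the $\Theta(\log n)$ {\CFG} comes from a doubling construction, the $\Theta(n)$ NFA guesses a matching position, and the lower bound goes via a disjoint cover by balanced rectangles extracted from balanced parse-tree nodes, with disjointness coming from unambiguity.

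However, your description of the final step contains a genuine gap. You write that the discrepancy argument amounts to showing that ``every individual three-factor rectangle, intersected with $L_n$, can contain at most $|L_n|/2^{\Omega(n)}$ elements''. This is a \emph{rectangle-size} bound, not a discrepancy bound, and it is simply false for the intersection predicate: the rectangle ``position $1$ is $a$ in both halves'' lies entirely inside $L_n$ and has size $2^{2n-2}$, a constant fraction of $|L_n|$. (This is exactly why the nondeterministic cover number of non-disjointness is small.) A size bound cannot exploit the disjointness of the cover and so cannot yield the desired lower bound.

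What the paper actually does is a genuine discrepancy argument. It restricts attention to a structured subfamily $\mathcal{L}$ (sets containing exactly one element from each length-$4$ block) and defines $A,B\subseteq\mathcal{L}$ by the parity of the number of intersecting positions. Globally $|A\cap L_n|-|B\cap L_n|$ is large (roughly $12^{n/4}$), but a Cauchy--Schwarz calculation in the style of Sherstov shows that for every balanced rectangle $R$ one has $\bigl||R\cap A|-|R\cap B|\bigr|\le 2^{(1-\varepsilon)n}$. Because the cover is disjoint, these signed contributions add up exactly to the global gap, forcing exponentially many rectangles. The additional wrinkle---absent from standard two-party lower bounds and which you correctly flag as the main obstacle---is that different rectangles use different balanced splits; the paper handles this by showing that for any balanced ordered partition most index pairs $(x_i,y_i)$ are separated, and then reducing to the fixed-partition case after conditioning on the few ``bad'' blocks. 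A minor related point: your rectangles are not genuinely three-factor products $U\times M\times V$, since the outer contexts $u$ and $v$ are correlated through the derivation $S\Rightarrow^* uAv$; the correct shape is a two-factor rectangle (outer context $\times$ middle yield), which is also what the paper uses.
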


Moreover, we may take $L_n$ to be exactly the language from the conjecture of Kimelfeld, Martens and Niewerth~\cite{DBLP:journals/corr/abs-2309-11663}. This language is very natural: it consists exactly of those strings of length $2n$ over the alphabet $\{a,b\}$ such that there are two $a$ symbols at distance exactly $n$ from one-another.

Conceptually, our techniques not only answer the call for size lower bound methods for {\uCFG}s but contribute more generally to our understanding of the role of unambiguity within computation.
This is an area which has received quite some attention within the formal language community. For example, unambiguity within automata theory was the subject of a recent Dagstuhl seminar \cite{DBLP:journals/dagstuhl-reports/ColcombetQS21}, see \cite{DBLP:conf/dcfs/Colcombet15} for a survey on this topic. It is not only in the context of {\CFG}s that unambiguity has proved difficult to deal with. One reason for this is probably that unambiguity is a semantic property, not a syntactic one, making it hard to `get your hands on' in a proof. 

As a result, basic questions about unambiguity remain open and others have only recently been resolved. For example, whether unambiguous finite automata admit polynomial time complementation was only relatively recently answered in the negative \cite{DBLP:conf/icalp/Raskin18,DBLP:conf/icalp/GoosK022}. Similarly, within knowledge compilation, whether d-DNNF circuits admit polynomial time negation has been open for over twenty years and even the special case of structured d-DNNF has only recently been resolved \cite{DBLP:conf/ijcai/Vinall-Smeeth24}.

\paragraph{On the Naturalness of $L_n$} Let us quickly comment on the naturalness of the language $L_n$ that we consider. Of course, the lower bound on $L_n$ is not directly a result in a practical setting. However, it encapsulates in a prototypical fashion a problem of unambiguous representations: how can one unambiguously deal with unions of highly non-disjoint sets? This is the underlying difficulty for many problems, e.g. the hardness of counting for NFAs and CFGs, and here we show that it is also hard for a type of succinct representation.

An advantage of studying $L_n$ as a very bare-bones version of the problem is that one can hopefully then use it to show hardness of other, more practical problems. This is the reason why different versions of Set Disjointness (essentially a complement of $L_n$) have been studied extensively in the communication complexity literature and then used in many lower bounds, see e.g. the applications for lower bounds on data structures, streaming algorithms, distributed computing, and the extension complexity of polytopes in~\cite{RaoY2020} and the survey~\cite{DBLP:conf/mfcs/Sherstov14}.

One somewhat natural concrete setting for the use of $L_n$ in information extraction might be the following: consider data in a CSV file with fixed columns from which we want to extract all pairs of lines that have identical entries in at least one column from a column set $S$. This can easily be modelled with the CFG formalisms proposed for information extraction in \cite{DBLP:conf/icdt/Peterfreund21,DBLP:conf/pods/AmarilliJMR22}, but if the algorithm requires unambiguous CFGs (as those in \cite{DBLP:conf/icdt/Peterfreund21,DBLP:conf/pods/AmarilliJMR22} do), then an easy reduction from $L_n$ shows that any such grammar must be of exponential size in the number of considered columns in~$S$. This lower bound remains true if instead of equality we require other natural comparison of the columns, say lexicographic order, similarity measures, and so on.

Thus, while $L_n$ is not itself natural from a data management perspective, we still feel that it might have practical applications. We will discuss these aspects in future versions of our paper.

\paragraph{Proof Outline}
As mentioned, we consider the  language $L_n$ from~\cite{DBLP:journals/corr/abs-2309-11663}. There, it was already shown that for some~$n$ the language $L_n$ admits a {\CFG} of size $\Theta(\log(n))$; for completeness we give such a grammar for all $n$ in Appendix~\ref{ap:smallCFG} (see also the discussion in Section~\ref{sec:preliminaries}). It was also remarked in~\cite{DBLP:journals/corr/abs-2309-11663} that~$L_n$ admits a nondeterministic finite automaton of size $\Theta(n)$; the idea is that the automaton first nondeterministically `guesses' the positions of the matching $a$ symbols and then verifies this guess. Therefore, to prove Theorem~\ref{thm:main} it suffices to prove the lower bound (3) for $L_n$. 

Thus far the best tools to analyse unambiguity seem to come from communication complexity, see~\cite{RaoY2020} for an introduction. Our paper is no different. We make a connection to communication complexity, motivated by work in knowledge compilation~\cite{BovaCMS16}, by relating the size of {\uCFG}s and so-called \emph{rectangles} (see Section~\ref{sec: rectangles}). To be precise, we show that if we cannot split up a language $L$ into a small number of disjoint rectangles, then $L$ does not admit a small {\uCFG}. 
Since rectangles have been studied extensively in communication complexity, one may hope to use results from that area as a `black box.' Unfortunately, this cannot be done since our setting is more general than the one normally studied in communication complexity. 
The main technical contribution of this paper is that we overcome this difference in settings. 

To do this we build on an approach from~\cite{DBLP:journals/siamcomp/Sherstov16}, by defining disjoint $A, B \subset L_n$, such that any rectangle contains roughly the same number of words from $A$ and $B$. On the other hand,~$L_n$ contains many more words from $A$ than $B$, so we need many disjoint rectangles to cover the whole language. This type of argument is known as a \emph{discrepancy argument} in the communication complexity literature, see e.g.~the textbook description~\cite[Chapter~5]{RaoY2020}.

\paragraph{Related Work}
It has long been known that for \emph{infinite} languages {\CFG}s are far more succinct than {\uCFG}s. More than 40 years ago, Schmidt and Szymanski~\cite{DBLP:journals/siamcomp/SchmidtS77} showed that in general the size of~a {\uCFG} representing a language $L$ cannot be bounded by any computable function in the size of~a {\CFG} for the same language $L$. 
However, this result is not true for \emph{finite} languages, which are more interesting in the database setting that we consider. 
There, our doubly exponential separation is optimal, in the sense that every {\CFG} accepting a finite language can be transformed into an equivalent {\uCFG} with at most a double-exponential blow-up \cite{DBLP:journals/corr/abs-2309-11663}.

This is not the first time that different representations of finite languages have been compared qualitatively. In fact, since the influential paper~\cite{DBLP:journals/tcs/BucherMCW81}, the representation of finite languages has been an active research field, see e.g. \cite{DBLP:journals/jalc/CampeanuH04, DBLP:conf/dcfs/HolzerW18, DBLP:journals/mst/CaselFGGS21}. 
Actually, in \cite{DBLP:journals/tcs/BucherMCW81} {\uCFG}s and {\CFG}s are also compared for succinctness. However, 
they measure the size of a grammar in terms of the total number of rules whereas we measure the sum of the sizes of all rules. 
Our size measure corresponds to the size of factorised representations while the measure in~\cite{DBLP:journals/tcs/BucherMCW81} is far smaller: a constant number of rules can create languages that need linear size factorised representations. Consequently, 
our results, are quite different. Perhaps the paper closest in flavour to ours is by Filmus~\cite{DBLP:journals/ipl/Filmus11}, where lower bounds on the sizes of {\CFG}s for various finite languages are shown, with respect to our size model.

Another influential research strand considering representation by {\CFG}s 
is that of \emph{grammar based compression} \cite{DBLP:journals/tit/KiefferY00, DBLP:journals/tit/KiefferYNC00, DBLP:journals/tit/YangK00}. There one aims to find a small {\CFG} representing a single word $w$; we may think of this process as compressing a long document.\footnote{Note, that in this context {\CFG}s are often called \emph{straight-line programs}.}
  Algorithms can then be applied directly to the compressed document  \cite{DBLP:journals/gcc/Lohrey12}. 
  This paradigm has become increasingly influential in database theory in recent years, see e.g.~\cite{DBLP:conf/pods/SchmidS21,DBLP:conf/pods/SchmidS22,DBLP:journals/pacmmod/LohreyS24,DBLP:conf/pods/SchmidS22a,DBLP:conf/icdt/MunozR23}. Since we are interested in {\CFG}s that represent many strings 
  our results are quite different from this line of work.

	\section{Preliminaries}\label{sec:preliminaries}

We write $[n]:= \{1, 2, \dots, n\}$ and for integers $i,j$ with $i\le j$, we write $[i,j]:= \{\ell\in \mathbb{Z}\mid i\le \ell \le j\}$. For~a set $U$ we write $\mathcal{P}(U)$ to denote its power set, i.e., the set containing every subset of $U$. Moreover,  let $U$ and $V$ be sets of sets, such that $A \cap B = \emptyset$ for all $A \in U$, $B \in V$. Then we may identify each pair $(A,B)$ with the set $A \cup B$ and so reflecting this we will write $U \times V : = \{A \cup B \, \mid \, A \in U, B \in V\}$.

We assume that the reader is familiar with some basics of context free grammars, see e.g.~\cite{Sipser97}. Let $\Sigma$ be a set of symbols which we call an \emph{alphabet}. A \emph{word} over $\sigma$ is a finite string of symbols from $\Sigma$. For a word $w$ we write $|w|$ to denote its length. We write $\Sigma^{\ast}$ for the set of all words over~$\Sigma$. Then a \emph{language over $\Sigma$} is defined to be a set $L \subseteq \Sigma^{\ast}$. This paper is only concerned with \emph{finite languages} which means that $L$ is a finite set. 

\begin{definition}
A \emph{context free grammar} ({\CFG}) is a four-tuple $G = (\Sigma, N, R, S)$ where 
\begin{itemize}
\item $\Sigma$ is a finite set of symbols called the \emph{terminals},
\item $N$ is a finite set called the \emph{non-terminals},
\item $R$ is a set of \emph{rules} of the form $A \to W$ where $A \in N$ and $W \in (\Sigma \cup N)^{\ast}$, and
\item $S \in N$ is the \emph{start symbol}.
\end{itemize} 
\end{definition}

When we have two rules $A\rightarrow W$ and $A\rightarrow W'$ with $W\ne W'$, we sometimes write them more compactly as $A\rightarrow W \mid W'$. This notation always has to be interpreted as two rules and not as a single, more complex rule. 

{\CFG}s generate a language over the alphabet $\Sigma$ as follows. Let $A \to W$ be a rule in $R$ and let $u,v \in \Sigma^{\ast}$. Then we say that $uAv$ \emph{yields} $uWv$, in symbols $uAv \rightarrow uWv$. For $W, W' \in (\Sigma \cup N)^{\ast}$, we say that $W$ derives $W'$ if there is a finite sequence $(w_i)_{i \in [k]}$ such that $w_1 = W$, $w_k = W'$ and $w_i \rightarrow w_{i+1}$ for all $i \in [k-1]$. We write $W \rightarrow^{\ast} W'$. Then the language of a grammar $G$ is defined as $L(G):= \{w \in \Sigma^{\ast} \, \mid \, S \rightarrow^{\ast} w\}$. We also say that the {\CFG} \emph{accepts} $L(G)$. We define the size of a {\CFG} to be $|G| := \sum_{(A \to W) \in R} |W|$. Note that---in contrast to infinite languages---every finite language is accepted by some context free grammar, so the concern of our work is not expressivity of {\CFG}s but succinctness of representation by {\CFG}s.

We now give an example that we will use for our main result.

\begin{example} \label{ex: ambig}
Define the language 
\[
L_n := \{ (a + b)^{k} a (a+b)^{n-1} a ( a + b)^{n-1-k} \, \mid \, k \le n-1\}
\]
consisting of words of length $2n$ that contain two $a$ symbols with a word of length exactly $n-1$ between them. In \cite{DBLP:journals/corr/abs-2309-11663} the following small {\CFG}s are defined which accept $L_{2^n+1}$.

Let $G_n$ be the grammar with terminals $\{a,b\}$, non-terminals $\{A_i, B_i\}_{0 \le i \le n}$, start symbol $A_n$ and rules,
\begin{align*}
A_i &\to B_{i-1}A_{i-1} \mid A_{i-1}B_{i-1} & \text{ for } i \in [n] \\
A_0 &\to B_0 a B_n a \mid a B_n a B_0 \\
B_i &\to B_{i-1}B_{i-1} & \text{ for } i \in [n] \\ 
B_0 &\to a \mid b.
\end{align*}
Then $G_n$ has size $\Theta(n)$ and accepts $L_{2^n +1}$ see \cite{DBLP:journals/corr/abs-2309-11663} for details. 
\end{example}

In fact the above grammar can be adapted to give a $\Theta(\log(n))$ sized {\CFG} accepting $L_n$ for every~$n$. For completeness we give such a grammar in Appendix~\ref{ap:smallCFG}.

\begin{figure}
    \begin{subfigure}{.4\textwidth}
    \begin{tikzpicture}[
        scale=.5
        ]
        \node{$A_1$}
        child {node {$A_0$}
            child {node {$B_0$}
                child {node {$a$}}
            }
            child {node {$a$}}
            child {node {$B_1$}
                child {node {$B_0$}
                    child {node {$a$}}
                }
                child {node {$B_0$}
    child {node {$a$}}
}
            }
            child {node {$a$}}
        }
        child[missing] {}
        child[missing] {}
        child {node {$B_0$} 
            child {node {$a$}}}
    ;
    \end{tikzpicture}
    \end{subfigure}
    \begin{subfigure}{.4\textwidth}
    \begin{tikzpicture}[
        scale=.5
        ]
        \node{$A_1$}
        child {node {$B_0$} 
            child {node {$a$}}}
        child[missing] {}
        child[missing] {}
        child {node {$A_0$}
            child {node {$a$}}
            child {node {$B_1$}
                child {node {$B_0$}
                    child {node {$a$}}
                }
                child {node {$B_0$}
                    child {node {$a$}}
                }
            }
            child {node {$a$}}
            child {node {$B_0$}
                child {node {$a$}}
            }
        }
        ;
    \end{tikzpicture}
       \Description{Two different parse trees for the word $aaaaaa$ for the grammar of Example~\ref{ex: ambig}.}
\end{subfigure}
    \caption{Two different parse trees for the word $aaaaaa$ for the grammar of Example~\ref{ex: ambig}.}
    \label{fig:ambiguous}

\end{figure}
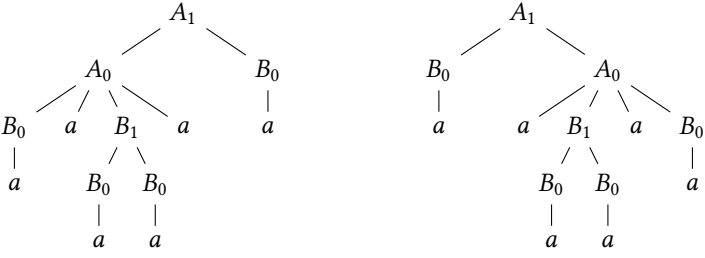

Each derivation in a context free-grammar may be associated with a tree, called a parse-tree, in the natural way (see Figure~\ref{fig:ambiguous}). We say that a {\CFG} is \emph{unambiguous} if every $w \in L(G)$ has a unique parse tree; equivalently this means that every word in $L(G)$ has a unique derivation in $G$. We call such a grammar a {\uCFG}.

\begin{example}
    The grammar from Example~\ref{ex: ambig} is not unambiguous; see Figure~\ref{fig:ambiguous} for two different parse trees of the same word. So let us construct an unambiguous grammar for $L_n$.
    
    In a first step, it will be useful to generate all possible words of a given length $i$. To this end, we introduce non-terminals $C_1, \ldots, C_n$ and rules 
    \begin{align*}
        C_i &\to a C_{i-1} \mid b C_{i-1}& \text{ for } i \in [n], i>1 \\
        C_1 &\to a \mid b.
    \end{align*}
    Clearly, any derivation from any $C_i$ is unambiguous. Next, for every word $w\in \Sigma^{\le n}$, we introduce a non-terminal $A_w$ and a rule
    \begin{align*}
        A_w &\to w.
    \end{align*}
    One way to enforce unambiguity for $L_n$ is to ensure that 
    each derivation of a word $w$ in $L_n$ determines the first pair of positions containing $a$ at distance $n$ in $w$. To this end, we introduce for every $i\in [n]$ a way to generate words $u$ such that $i$ is the smallest integer for which the letters at position $i$ and $i+n$ in $u$ are both $a$. For every $i<n$, we thus introduce a non-terminal $A_i$ and for every $w\in \Sigma^{i-1}$ the rule
    \begin{align*}
        A_i &\to A_w a C_{n-i} A_{\bar{w}}a C_{n-i}, 
    \end{align*}
    where $\bar{w}$ is the word we get from $w$ by complementing all its letters, i.e., by flipping all occurrences of $a$ to $b$ and vice-versa. For $i=n$, we introduce $A_n$ and the rules
    \begin{align}\label{eq:uCFGexample}
        A_n &\to A_w a A_{\bar w} a
    \end{align} 
    for all $w\in \Sigma^{n-1}$.
    Clearly, all derivations from any $A_i$ are unambiguous. It only remains to add a starting symbol $S$ and the rules
    \begin{align*}
        S &\to A_1 \mid \ldots \mid A_n.
    \end{align*}
    Note that the grammar we have just constructed accepts $L_n$ but it has exponential size in $n$; in particular, there are exponentially many rules in (\ref{eq:uCFGexample}). So the unambiguous grammar here is of double exponential size with respect to the ambiguous one from Example~\ref{ex: ambig}. Our main result will show that this is unavoidable because every grammar for $L_n$ has exponential size in $n$.
\end{example}

We say that a {\CFG} is in \emph{Chomsky normal form} if all rules are of the form $A\rightarrow BC$ or $A\rightarrow a$ where $A,B,C$ are non-terminals and $a$ is a terminal \cite{DBLP:journals/iandc/Chomsky59a}. It is well-known that any {\CFG} $G$ can be transformed into an equivalent one $G'$ in Chomsky normal form, such that $|G'| \le |G|^2$. Therefore, from now on we always assume that {\CFG}s are in Chomsky normal form. Moreover, we assume that we don't have any redundant non-terminals, i.e. that every non-terminal $A$ appears in at least one parse tree generated by the grammar. If this is not the case, we get a smaller grammar accepting the same language by deleting all rules containing $A$.

	\section{From Unambiguous CFGs to Rectangle Covers} \label{sec: rectangles}

In this section, we show how one can prove lower bounds for unambiguous {\CFG}s by means of communication complexity. In fact, we focus on {\CFG}s whose languages are finite and, in particular, whose words are all of the same length. The following definition is crucial.

\begin{definition}[Rectangle] \label{def: rect} 
We say that a language $L$ with words of length $n$ is a \emph{rectangle} if there are numbers $n_1, n_2, n_3$ such that:
\begin{align*}L = \bigcup_{\substack{w_1\!w_3\in L_1: \\ |w_1| = n_1, |w_3| = n_3}} 
    \{w_1\}\times L_2 \times \{w_3\} \text{ where }L_1 \subseteq \Sigma^{n_1+n_3}, L_2 \subseteq \Sigma^{n_2}.
\end{align*}
We also say that $L$ is a rectangle with \emph{parameters} $(L_1, L_2,n_1, n_2, n_3)$. We say that a rectangle is \emph{balanced} if and only if $\frac{n}{3} \le n_2 \le \frac{2n}{3}$.
\end{definition} 

\begin{example}
Consider the language
\[
L_n^{\ast} := a^{n/2} (a+b)^{n} a^{n/2}
\]
of all words of length exactly $2n$ which begin and end with $n/2$ consecutive $a$ symbols. Then $L_n^{\ast}$ is a balanced rectangle. This follows by setting $n_1 = n_3 = n/2$, $n_2 = n$, $L_1 = \{a^n\}$ and $L_2 = (a+ b)^n$. 
\end{example}

It is easy to see that every language can be written as a union of balanced rectangles since any language containing a single word is a balanced rectangle. The aim of this section is to show that the number of rectangles needed to cover a language is closely related to the size of {\CFG}s accepting the language. This is formalised by the following proposition.

\begin{proposition}\label{prop:rectangles}
    Let $L$ be a language in which all words have length $n$ and which is accepted by a {\CFG}~$G$. Then $L$ can be written as
    \begin{align}\label{eq:rectangles}
        L:= \bigcup_{i\in [\ell]} L_i
    \end{align}
    where every $L_i$ is a balanced rectangle and $\ell \le n|G| 
    $. Moreover, if $G$ is unambiguous, then the union in (\ref{eq:rectangles}) is disjoint.
\end{proposition}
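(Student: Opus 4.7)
The plan is to use a balanced-tree argument on parse trees of $G$, which we may assume is in Chomsky normal form. First, I would identify in each parse tree $T$ of a length-$n$ word a canonical \emph{balanced node}: starting at the root, walk down always going to the child whose subtree has more leaves (breaking ties to the left), and stop at the first node $v^*_T$ whose subtree has at most $2n/3$ leaves. Because the parent of $v^*_T$ had strictly more than $2n/3$ leaves and $v^*_T$ is the heavier of two siblings, its subtree spans $m$ leaves with $n/3 < m \le 2n/3$, which makes any rectangle cut out at $v^*_T$ balanced.

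Next, I would extract the rectangle from $v^*_T$. Writing its label as $A$ and its span as $[i+1, i+m]$, we have $w = w_1 w_2 w_3$ with $|w_1| = i$, $|w_2| = m$, $|w_3| = n - i - m$; the subtree at $v^*_T$ witnesses $A \to^* w_2$, while contracting that subtree to a single $A$-leaf gives a derivation $S \to^* w_1 A w_3$. Setting
\[
L_2^{A,m} := \{u \in \Sigma^m : A \to^* u\}, \qquad L_1^{A,i,m} := \{(u_1, u_3) : |u_1| = i,\ |u_3| = n-i-m,\ S \to^* u_1 A u_3\},
\]
the resulting set $R_{A,i,m}$ is a balanced rectangle contained in $L$ (since any $S \to^* u_1 A u_3$ combined with $A \to^* u_2$ gives a valid derivation of $u_1 u_2 u_3$ in $G$) and contains $w$. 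Taking the union over all triples $(A,i,m)$ arising from canonical balanced nodes of parse trees of words in $L$ therefore covers $L$.

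The main obstacle is bounding the number of such triples by $n|G|$. I would charge each triple to a pair (rule, starting position), looking at the rule applied at $v^*_T$: since in CNF each rule contributes at least one to $|G|$ and the starting position ranges in $[0, n-1]$, there are at most $n|G|$ such pairs. Care is needed here to verify that each (rule, position) pair is charged by at most a bounded number of distinct triples, using that $m$ is tightly constrained by the balanced range together with the label $A$ and the rule in use.

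Finally, for disjointness when $G$ is unambiguous, I would refine each rectangle to contain only those words whose (unique) parse tree has its canonical balanced node exactly at the specified triple. The refined set remains a rectangle: replacing $w_2$ by any other $u \in L_2^{A,m}$ yields a word $w'$ whose unique parse tree is obtained from $T$ by swapping the subtree at $v^*_T$ for a derivation of $u$ from $A$; this swap preserves the number of leaves in every subtree along the walk from the root, so the walk-down procedure identifies the same canonical balanced node in $w'$ as in $w$. Disjointness is then immediate, since each word has a unique parse tree and hence a unique canonical triple.
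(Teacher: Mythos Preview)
Your approach is sound and takes a genuinely different route from the paper's. The paper first replaces $G$ by a position-indexed grammar $G'$ in which every non-terminal $A$ is split into copies $A_1,\ldots,A_n$ (one per possible starting position), shows $|G'|\le n|G|$ and that unambiguity is preserved, and then produces the cover \emph{iteratively}: while $G'$ is nonempty, pick any parse tree, locate some non-terminal $A_i$ whose yield has length in $[n/3,2n/3]$, output the rectangle of all words with a parse tree through $A_i$, and delete $A_i$ from $G'$. Disjointness in the unambiguous case follows because deleting $A_i$ removes every word whose unique parse tree used it. Your route replaces this extraction-by-deletion with a canonicalization: the heavy-path walk selects one distinguished balanced node per parse tree, and unambiguity then assigns each word to exactly one triple. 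Your version avoids building $G'$ and running an iteration; the paper's version avoids having to verify that the canonical walk is stable under swapping the middle factor (which you do correctly, since the swap preserves leaf counts all along the root-to-$v^*_T$ path and hence both the heavy-child choices and the stopping point).

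One step needs tightening. Your bound on the number of triples $(A,i,m)$ leans on the remark that ``$m$ is tightly constrained by the balanced range together with the label $A$ and the rule in use'', but neither the balanced range (which contains $\Theta(n)$ integers) nor a rule $A\to BC$ by itself pins down $m$. What you actually need is the easy observation---which the paper states explicitly---that because all words of $L$ have length $n$, every non-terminal $A$ derives words of a single fixed length $m_A$; otherwise substituting one subderivation for another in a parse tree would produce a word in $L$ of a different length. With this in hand $m=m_A$ is determined by $A$ alone, so your triples are really pairs $(A,i)$, and your charging to (rule at $v^*_T$, starting position) yields at most one triple per pair and hence at most $n|G|$ rectangles in total.
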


With Proposition~\ref{prop:rectangles} in hand, to prove that a language does not admit a small {\uCFG} it suffices to show that whenever $L$ is equal to a disjoint union of rectangles, the number of rectangles in the union must be big.  This gives a lower bound technique analogous to those which have been successfully deployed in \emph{knowledge compilation}, see for example \cite{BovaCMS16}.

\begin{example}
$L_n$ can be written as a union of $n$ balanced rectangles. To see this observe that for all fixed $0 \le k \le n-1$ the language
\[
L_n^k := (a+b)^k a (a+b)^{n-1} a (a+b)^{n-1-k}
\]
is a balanced rectangle, setting $n_1 = k$, $n_2 = n+1$, $n_3 = n-1-k$, $L_1 = (a+b)^{n-1}$ and $L_2 = a(a+b)^{n-1}a$. Moreover $L_n$ is by definition the union of all these languages. However, note that the $L_n^k$ languages are not disjoint. In fact, in Section~\ref{sec: lb} we will see that any union of balanced rectangles which equals $L$	must have size $2^{\Omega(n)}$, which by Proposition~\ref{prop:rectangles} proves our main lower bound.
\end{example}

In the remainder of this section, we will prove Proposition~\ref{prop:rectangles}. So fix a language $L$ in which all words have the same length $n$ and a context free grammar $G$ accepting $L$. We start with the following simple observation.

\begin{observation}\label{obs:samelength}
    For every non-terminal $A$ of $G$, the words that can be generated by a derivation in~$G$ starting from $A$ all have the same length.
\end{observation}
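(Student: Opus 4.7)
The plan is to use a swap argument that exploits two facts assumed in the preliminaries: every non-terminal is non-redundant (appears in some parse tree of $G$ rooted at $S$) and every word in $L = L(G)$ has the same length $n$.

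Fix a non-terminal $A$. Since $A$ is non-redundant, there exists a parse tree $T$ for $G$ whose root is labelled $S$ and which contains some node $v$ labelled $A$. Let $w \in L$ be the word derived by $T$, so $|w| = n$, and let $u$ be the sub-word generated by the subtree $T_v$ of $T$ rooted at $v$; set $k := |u|$. I will show that every word derivable from $A$ has length exactly $k$.

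Suppose, for contradiction, that there is a derivation $A \rightarrow^{\ast} u'$ with $|u'| \ne k$, and let $T'$ be a parse tree witnessing this derivation (whose root is labelled $A$). Construct a new parse tree $\widetilde{T}$ from $T$ by replacing the subtree $T_v$ with $T'$; this is a well-formed parse tree because the root of $T'$ and the node $v$ both carry the label $A$. The tree $\widetilde{T}$ has root labelled $S$, and it derives a word $\widetilde{w}$ obtained from $w$ by replacing its factor $u$ with $u'$. In particular $\widetilde{w} \in L(G) = L$, but $|\widetilde{w}| = n - k + |u'| \ne n$, contradicting the hypothesis that all words of $L$ have length $n$. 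Hence every word derivable from $A$ must have length $k$, which proves the observation.

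I do not expect any real obstacle: the argument is a standard pumping-style substitution of subtrees, and the assumptions of non-redundancy and uniform word length in $L$ are exactly what makes it go through. The only mild subtlety is remembering to invoke non-redundancy to guarantee that at least one reference parse tree containing $A$ exists, so that $k$ is well-defined to begin with.
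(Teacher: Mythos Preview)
Your proof is correct and follows essentially the same approach as the paper: both argue by contradiction via substituting the subtree below $A$ in a parse tree (guaranteed to exist by non-redundancy) with a derivation of a different-length word, yielding a word in $L$ of the wrong length. Your version is simply more explicit about the details.
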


\begin{proof}
    By way of contradiction, assume that this were not the case, so there is some $A$ from which we can generate words of different lengths. Then, by assumption, we know that $A$ appears in~a parse tree of $G$. But in that parse tree we can substitute the derivation below $A$ by derivations for words of different lengths, and we get that $L$ contains words of different lengths, which contradicts the assumption on $L$.
\end{proof}

With the help of Observation~\ref{obs:samelength}, we can rewrite $G$ into an equivalent grammar in the following way: for every non-terminal $A$ of $G$ introduce new non-terminals $A_1, \ldots A_n$. For every rule $A\rightarrow BC$ let $\ell$ be the length of the words generated by $B$ and $i \in [n]$. Then if $i+\ell\le n$ the grammar $G'$ has rules $A_i\rightarrow B_i C_{i+\ell}$. Moreover, for every rule $A\rightarrow a$, the grammar $G'$ has the rules $A_i\rightarrow a$ for $i\in [n]$. Finally, the start symbol of $G'$ is $S_1$.

\begin{lemma} \label{lem:equiv}
    The grammars $G$ and $G'$ generate the same language and $|G'| \le n|G|$. Moreover, if $G$ is unambiguous then so is $G'$.
\end{lemma}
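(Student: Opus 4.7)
The plan is to establish a natural bijection between parse trees in $G$ whose roots derive length-$n$ words and parse trees in $G'$ rooted at $S_1$; all three claims follow.

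First, for the size bound: each rule $A\to BC$ of $G$ contributes at most $n$ rules of the form $A_i\to B_iC_{i+\ell}$ in $G'$ (one for each admissible $i\in[n]$), each of size $2$; each rule $A\to a$ contributes exactly $n$ rules $A_i\to a$, each of size $1$. Since $G$ is in Chomsky normal form, summing over all rules of $G$ yields $|G'|\le n|G|$.

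For equivalence, I would prove the following claim by induction on the height of a derivation tree: for every non-terminal $A$ of $G$, every $i\in[n]$ with $i+\ell_A-1\le n$ (where $\ell_A$ is the well-defined common length of words derivable from $A$, guaranteed by Observation~\ref{obs:samelength}), and every word $w$ of length $\ell_A$, we have $A\to^\ast w$ in $G$ if and only if $A_i\to^\ast w$ in $G'$. The base case $A\to a$ is immediate from the rules $A_i\to a$. For the inductive step, a top-level application $A\to BC$ in $G$ yields $w=uv$ with $B\to^\ast u$ and $C\to^\ast v$; since $i+\ell_B = i+\ell_A-\ell_C \le i+\ell_A-1 \le n$, the rule $A_i\to B_i C_{i+\ell_B}$ is present in $G'$, and the inductive hypothesis applies to both $B_i\to^\ast u$ and $C_{i+\ell_B}\to^\ast v$ (the required range conditions $i+\ell_B-1\le n$ and $(i+\ell_B)+\ell_C-1\le n$ hold). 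The reverse direction is symmetric, since every rule of $G'$ arises from a rule of $G$ after erasing subscripts. Applying the claim to the start symbol with $i=1$ and $\ell_S=n$ gives $L(G)=L(G')$.

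For unambiguity, I would formalise the bijection: given a parse tree $T$ in $G$ whose root $S$ yields a word of length $n$, labelling each internal node by the starting position of the substring it derives produces a parse tree $T'$ in $G'$ rooted at $S_1$; conversely, erasing the position subscripts from a parse tree in $G'$ gives a parse tree in $G$. These maps are mutually inverse because the starting position of every node is forced by the tree structure together with Observation~\ref{obs:samelength}, so the labelling is unique. Hence if every $w\in L(G)$ has a unique parse tree in $G$, the same is true in $G'$. The main obstacle, such as it is, is bookkeeping: one must check that the range condition $i+\ell\le n$ baked into the construction of $G'$ is exactly what the induction demands, and that position labels in the bijection are determined rather than chosen. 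Both points are handled by careful application of Observation~\ref{obs:samelength}.
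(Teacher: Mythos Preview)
Your proposal is correct and follows essentially the same approach as the paper: both arguments rest on the position-indexing bijection between parse trees of $G$ and $G'$ (relabelling each occurrence of $A$ by $A_i$ where $i$ is the start position of the subword it derives, and conversely erasing subscripts), and both obtain the size bound by observing that each rule of $G$ spawns at most $n$ rules of the same size in $G'$. The only difference is presentational: you spell out language equivalence via an explicit induction on derivation height before invoking the bijection for unambiguity, whereas the paper derives both equivalence and unambiguity in one stroke from the bijection on parse trees.
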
    
\begin{proof}
    We build on the observation that in $G'$ the index $i$ of the non-terminals $A_i$ determines the position of the first letter of the word generated from $A_i$ in every word generated by $G$.     
    We use this to show that there is a bijection between the parse trees for the grammar $G$ and $G'$. So fix~a parse tree for a word $w$ in~$G$. Let $A$ be a non-terminal symbol in this parse tree and say that $A$ generates the word $w'$. Let $i$ be the position of the first letter of $w'$ in $w$. Then we substitute $A$ by~$A_i$ in the parse tree. Doing this for every non-terminal yields a new tree which is easily seen to be a parse tree of $G'$ for $w$. The inverse construction is simply deleting the indices that we introduced in the construction of $G'$. Overall, this shows that $G$ and $G'$ generate the same languages. Moreover, since we have a bijection of parse trees, if $G$ is unambiguous then so is $G'$.
    
    For the size bound, observe that for every $i\in [n]$ and every rule of $G$, we introduced a new rule of the same size. Thus, the size grows by a factor of $n$.
\end{proof}

We now make a connection between non-terminals of $G'$ and rectangles.

\begin{observation} \label{ob: rect}
    Let $A_i$ be a non-terminal of $G'$. Let $L'$ consist of all the words in the language of~$G'$ with a parse tree containing $A_i$. Then $L'$ is a rectangle for some parameters $(L_1, L_2, n_1, n_2, n_3)$ where the words in $L_2$ are exactly those generated by $A_i$.
\end{observation}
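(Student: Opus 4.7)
The plan is to build the rectangle directly from the structure of parse trees of $G'$ that contain an occurrence of $A_i$. First I would let $\ell$ denote the unique (by Observation~\ref{obs:samelength}) length of words generated from $A_i$ in $G'$, and set $n_1 = i - 1$, $n_2 = \ell$, and $n_3 = n - i - \ell + 1$. Then I would define $L_2 := \{w_2 \in \Sigma^{\ell} : A_i \rightarrow^{\ast} w_2\}$ and take $L_1$ to be the set of strings $w_1 w_3 \in \Sigma^{n_1 + n_3}$ (with the middle position cut out) such that there is a derivation $S_1 \rightarrow^{\ast} w_1 A_i w_3$ in $G'$ with $w_1, w_3 \in \Sigma^{\ast}$.

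Next, I would pin down the key invariant of $G'$: whenever a non-terminal $X_j$ appears in a parse tree of a word $w \in L(G')$, the subword generated from that occurrence starts at position $j$ in $w$. This follows by induction along the derivation from the root $S_1$, since $S_1$ has index $1$ and occupies position $1$, and each production $X_j \rightarrow Y_j Z_{j+\ell'}$ of $G'$ was created precisely so that the right child starts at the position immediately following the subword of length $\ell'$ generated from $Y_j$. In particular, every occurrence of $A_i$ in a parse tree spans positions $i$ through $i + \ell - 1$.

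With this invariant in hand, the rectangle identity $L' = \bigcup_{w_1 w_3 \in L_1} \{w_1\} \times L_2 \times \{w_3\}$ reduces to two easy inclusions. For the forward inclusion, if $w \in L'$ has a parse tree $T$ containing $A_i$, I write $w = w_1 w_2 w_3$ where $w_2$ is the subword generated by the subtree of $T$ rooted at $A_i$; then $w_2 \in L_2$, and collapsing that subtree to the single node $A_i$ witnesses $w_1 w_3 \in L_1$. For the reverse inclusion, given $w_1 w_3 \in L_1$ via $S_1 \rightarrow^{\ast} w_1 A_i w_3$ and any $w_2 \in L_2$ via $A_i \rightarrow^{\ast} w_2$, the concatenated derivation $S_1 \rightarrow^{\ast} w_1 A_i w_3 \rightarrow^{\ast} w_1 w_2 w_3$ is a valid derivation in $G'$ whose parse tree contains $A_i$, so $w_1 w_2 w_3 \in L'$.

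I do not anticipate a real obstacle; the only point requiring care is the indexing invariant, since it is precisely that feature of $G'$ that supplies the cut-and-paste property underlying the rectangle structure. Once the invariant is established, both inclusions are routine context-free grammar cut-and-paste arguments, and the claim that $L_2$ is exactly the set of words generated by $A_i$ holds by construction.
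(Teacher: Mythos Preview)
Your proposal is correct and follows essentially the same approach as the paper's proof: both hinge on the indexing invariant of $G'$ (which the paper already recorded in the proof of Lemma~\ref{lem:equiv}) and then use the standard cut-and-paste argument to show that the middle block can be freely substituted by any word generated from $A_i$. Your write-up is simply more explicit about defining $L_1$, $L_2$ and proving the two inclusions separately, whereas the paper argues the substitution property in one direction and then observes that the lengths $|w_1|, |w_2|, |w_3|$ are forced, which gives the rectangle decomposition directly.
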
 
\begin{proof}
    Consider any word $w$ in $L'$. Then $w= w_1 w_2 w_3$, where $w_2$ is generated from $A_i$. Then we can exchange $w_2$ by any other word $\bar w_2$ generated from  $A_i$ to get $\bar w = w_1\bar w_2w_3$ which is also in $L'$, so $\{w_1\}\times L(A_i)\times \{w_3\}\subseteq L'$ where $L(A_i)$ consists of all words generated from $A_i$.
    
    By Observation~\ref{obs:samelength}, we know that all words in $L(A_i)$ have the same length, that $|w_1|= i-1$ and $|w_3|= n - (i-1) - |w_2|$. Therefore, for all $w' = w_1'w_2'w_3'\in L'$ with $w_2'\in L(A_i)$ we have $|w_1| = |w_1'|$, $|w_2| = |w_2'|$  and $|w_3| = |w_3'|$. So we can write
    \begin{align*}
        L'= \bigcup_{\substack{w_1\!w_2\!w_3\in L': \\ w_2\in L(A_i)}} \{w_1\}\times L(A_i) \times \{w_3\}
   \end{align*}  
    which proves the observation.
\end{proof}

We can now construct the set of rectangles in Proposition~\ref{prop:rectangles} iteratively via the following process. While the language of $G'$ is not empty, choose a word $w$ in it and a parse tree for $w$. We claim that this parse tree must contain a non-terminal $A_i$ that generates words of length between $\frac{n}{3}$ and $\frac{2n}{3}$. To find this non-terminal, we use the following standard procedure: we descend in the parse tree from the root, iteratively always following the edge to the child whose subtree contains more leaves. We stop when the current node $v$ has less than $\frac{2n}{3}$ leaves in its subtree and claim that we have found the desired non-terminal which is the label of $v$. To see this, note that, since we have not stopped before, the subtree of the parent of $v$ contains more than $\frac{2n}{3}$ leaves, and since we go to the child with more children, $v$ must have at least $\frac{n}{3}$ leaves in its subtree. Observing that the number of leaves below $v$ is the same as the length of the generated word proves the claim. 

We put~$L'$, defined as in Observation~\ref{ob: rect}, into the set of rectangles we are constructing. Then we delete $A_i$ from~$G'$ as well as all other non-terminals that no longer appear in any parse tree and continue the iteration with the resulting language. 

This process terminates after at most $|G'| \le n|G|$ iterations and so produces some set of rectangles $\{R_i\}_{i \in [\ell]}$ with $\ell  \le n|G|$. Observe that, by construction, $\bigcup_{i \in [\ell]} R_i = L$. Moreover, if $G$ is unambiguous then, by Lemma~\ref{lem:equiv}, $G'$ is also unambiguous. This means that every word in $L$ appears in a unique parse tree in $G'$ and so the union $\bigcup_{i \in [\ell]} R_i$ is disjoint. Therefore, this process really does produce a set of rectangles witnessing Proposition~\ref{prop:rectangles}.

	\section{A Lower Bound for Disjoint Rectangle Covers} \label{sec: lb}

In this section, we show our lower bound for representations by {\uCFG}s via an exponential lower bound for the language $L_n$ from Example~\ref{ex: ambig}. This will provide the missing lower bound of Theorem~\ref{thm:main} and thus prove our main result.
Recall, that $L_n$ is defined as follows:
\[
L_n := \{ (a + b)^{k} a (a+b)^{n-1} a ( a + b)^{n-k-1} \, \mid \, k \le n-1\}.
\]

\noindent We will prove the following, confirming a conjecture from~\cite{DBLP:journals/corr/abs-2309-11663}.
\begin{theorem}\label{thm:lower}
For every $n\in \mathbb{N}$, every ${\uCFG}$ accepting $L$ has size $2^{\Omega(n)}$.
\end{theorem}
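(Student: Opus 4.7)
By Proposition~\ref{prop:rectangles}, any {\uCFG} $G$ accepting $L_n$ gives rise to a disjoint cover $L_n = \bigcup_{i \in [\ell]} R_i$ by balanced rectangles with $\ell \le 2n|G|$. Hence, to prove the theorem it suffices to show that every disjoint cover of $L_n$ by balanced rectangles requires $\ell = 2^{\Omega(n)}$ rectangles.

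My plan is to carry out the discrepancy-style argument outlined in the introduction. I aim to construct two disjoint subsets $A, B \subseteq L_n$ satisfying: (i) a \emph{global} gap $|A| - |B| \ge 2^{\Omega(n)}$; and (ii) a \emph{local} bound $\bigl| |R \cap A| - |R \cap B| \bigr| \le \delta$ for every balanced rectangle $R \subseteq L_n$, where $\delta$ is small enough that $(|A| - |B|)/\delta = 2^{\Omega(n)}$. Given (i) and (ii), the disjointness of the rectangles in the cover yields
\[
|A| - |B| = \sum_{i \in [\ell]} \bigl(|R_i \cap A| - |R_i \cap B|\bigr) \le \ell \cdot \delta,
\]
whence $\ell \ge (|A| - |B|)/\delta = 2^{\Omega(n)}$, as desired.

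To construct $A$ and $B$, I identify each $w \in \{a,b\}^{2n}$ with a pair $(x,y) \in \{a,b\}^n \times \{a,b\}^n$ obtained by splitting in the middle, so that $w \in L_n$ iff there exists $i \in [n]$ with $x_i = y_i = a$. The sets $A$ and $B$ are then defined in terms of this matching-pair structure: a natural starting candidate is to weight words by a sign depending on an invariant of the pattern of matching pairs (for instance the parity of their count), but the precise definitions must be tuned to the geometry of balanced rectangles described below.

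The core difficulty is to establish the local bound (ii). A balanced rectangle $R \subseteq L_n$ factors as $R = L_1 \times L_2$ according to a contiguous middle interval $M$ of length $n_2 \in [n/3, 2n/3]$, with $L_1$ ranging over outer letters and $L_2$ over middle letters. Crucially, $n_2 < n$ implies that no matching pair lies entirely in $M$, so for every word in $R$ each matching pair either sits entirely in the outer part (a condition on $L_1$ alone) or straddles $M$ (coupling $L_1$ and $L_2$ through the unique ``outer partner'' of each middle position). The plan is to leverage this rigid product structure: by an appropriate algebraic expansion of $|R \cap A| - |R \cap B|$ indexed by subsets of straddling pairs, the sum factors across $L_1$ and $L_2$, and the constraint $R \subseteq L_n$ severely restricts which factor pairs can arise and forces the dominant terms to cancel, leaving only a small residual $\delta$. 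Designing $A, B$ so that this cancellation materialises at the rate required in (ii), while simultaneously preserving the exponential global gap in (i), is the crux of the argument and the point where the analogy with the discrepancy method of~\cite{DBLP:journals/siamcomp/Sherstov16} enters.
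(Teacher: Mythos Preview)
Your high-level plan matches the paper's approach exactly: reduce via Proposition~\ref{prop:rectangles} to disjoint balanced rectangle covers, then run a Sherstov-style discrepancy argument with two sets $A,B$ witnessing a large global gap but small per-rectangle discrepancy. What is missing is the execution, and there is also one concrete error.

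First, the error. Words in $L_n$ have length $2n$, so ``balanced'' in Definition~\ref{def: rect} means the middle block has length $n_2\in[2n/3,4n/3]$, not $[n/3,2n/3]$. In particular $n_2>n$ is possible, and then matching pairs \emph{can} lie entirely in the middle. The correct statement, which the paper proves as Lemma~\ref{lem:ordered_par}, is that the \emph{smaller} side of the partition has size at most $n$ and therefore contains no complete matching pair. This is easily fixed, but your structural analysis of the ``straddling'' pairs has to be rephrased in terms of the smaller side rather than the middle.

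Second, and more substantially, you leave the definition of $A,B$ open, and the naive choice (parity of the number of matching pairs over \emph{all} words) does not by itself make the generalisation to arbitrary balanced partitions go through. The paper's key additional idea is to restrict attention to a structured subfamily $\Lset\subseteq\{a,b\}^{2n}$: partition the $2n$ positions into blocks of four and keep only those words that place exactly one $a$ in each block; then $A$ and $B$ are the odd- and even-parity words inside $\Lset$. This block structure is what makes the argument work. For the global gap (Lemma~\ref{lem: sizes}) it yields the clean count $|A|-|B\cap L_n|=12^m-8^m$. For the local bound on a $[1,n]$-rectangle (Lemma~\ref{lem: simple_des}) it turns the inner expectation, after the Cauchy--Schwarz/second-copy step, into a product of independent block contributions that vanish whenever $Y\ne Y'$. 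Most importantly, for a general balanced partition (Lemma~\ref{lem: gen_disc}) the block structure lets you peel off the ``bad'' positions---those $i$ with $x_i,y_i$ on the same side---at a controlled cost of $4$ choices per bad block, and then apply the $[1,n]$ bound on the remaining ``good'' positions. Without the one-per-block restriction there is no obvious way to bound the number of configurations on the bad positions while preserving the discrepancy estimate on the good ones, and this is precisely ``the crux'' you flag but do not resolve.
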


In the rest of this section, we prove Theorem~\ref{thm:lower}.
Observe that every word in $L_n$ has length $2n$ and that $L_n$ consists of all those words such that there are two $a$ symbols with distance exactly $n$ between them.  To prove Theorem~\ref{thm:lower}, we will use the rectangle based method presented in Section~\ref{sec: rectangles}.

Note that we may assume that $n$ is sufficiently big by choosing the constant hidden in the $\Omega$-notation to be sufficiently small. To simplify some of our arguments, for now we only consider~$L_n$ for values of $n$ which are divisible by $4$; it will be easy to remove this condition at the end of this section. 

\subsection{A Set Perspective}

Before diving into our proof we first introduce a change of perspective. It will be helpful to think of each word over $\{a,b\}$ of length $2n$ as defining a pair of subsets of~$[n]$. Formally, we define a bijection which maps each $w = w_1 \dots w_{2n}$, where $w_i \in \{a,b\}$ for each $i$ to a pair $(X_w, Y_w) \subseteq \{x_1, \dots, x_n\} \times \{y_1, \dots, y_n\}$  where $\{x_1, \dots, x_n\}$ and $\{y_1, \dots, y_n\}$ are both sets of elements and $X_w$ contains every $x_i$ such that $w_i = a$ and $Y_w$ contains every $y_i$ such that $w_{i+n} = a$. We write $X: =  \{x_1, \dots, x_n\}$ and $Y:= \{y_1, \dots, y_n\}$. 
Sometimes it will be useful to refer to the elements of $X \cup Y$ in a unified way, hence we define 
\[
z_i := \begin{cases}
x_i & i \in [n] \\
y_{i-n} & i \in [n+1, 2n].
\end{cases}
\]

\noindent We write $Z[i,j] : = \{z_{\ell} \, \mid \, i \le \ell \le j\}$ and $Z: = Z[1,2n]$. We similarly define $X[i,j]$ and $Y[i,j]$. Observe that each interval also induces a partition of~$Z$ as follows.

\begin{definition}[Ordered Partition]
We say that a partition $(\Pi_0, \Pi_1)$ of $Z$ is \emph{induced by the interval $[i,j]$} if for some $\ell \in \{0,1\}$, we have $\Pi_{\ell} = Z[i,j]$. We call a partition which is induced by some interval an \emph{ordered} partition. Moreover, we call an ordered partition $(\Pi_0, \Pi_1)$ \emph{balanced} if 
\[
\frac{2n}{3} \le |\Pi_0|, |\Pi_1| \le \frac{4n}{3}
\]
\end{definition}

Having these dual notions of intervals and the partitions they induce on $Z$ will be useful in our proofs. We next translate Definition~\ref{def: rect} to fit the set perspective.

\begin{definition}[Set Rectangle] \label{def: set_rect}
Let $(\Pi_0, \Pi_1)$ be an ordered partition of $Z$. We say that $R \subseteq \mathcal{P}(Z)$ is an ordered $(\Pi_0, \Pi_1)$-\emph{set rectangle} if there are sets $S \subseteq \mathcal{P}(\Pi_0)$ and $T \subseteq \mathcal{P}(\Pi_1)$ such that $R= S \times T$.\footnote{Remember that to simplify notation we read $S\times T$ as $\{U\cup V\mid U\in S, V\in T\}$.} We say that $R$ is a \emph{balanced set-rectangle} if $(\Pi_0, \Pi_1)$ is balanced. Moreover, if $(\Pi_0, \Pi_1)$ is induced by $[i,j]$ we also call~$R$ an $[i,j]$-set rectangle. 
\end{definition}

The following lemma justifies Definition~\ref{def: set_rect} and therefore our use of the set perspective in the rest of the proof.

\begin{lemma}
Let $L$ be a language over the alphabet $\{a,b\}$ with words of length $2n$, such that $L$ is~a rectangle in the sense of Definition~\ref{def: rect} with parameters $(L_1, L_2, n_1, n_2, n_3)$. Then $\{(X_w, Y_w) \, \mid \, w \in L\}$ is a $[n_1+1, n_1+n_2]$-set rectangle. 

Conversely, for any language $L$ over the alphabet $\{a,b\}$ with words of length $2n$ such that $\{(X_w, Y_w) \, \mid \, w \in L\}$ is an $[i,j]$-set rectangle, $L$ is a rectangle with $n_1 = i-1$, $n_2 = j-i+1$ and $n_3 = 2n-j$. 
\end{lemma}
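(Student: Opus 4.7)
The plan is to verify both directions of the equivalence by a direct unpacking of the two definitions, using the key fact that the bijection $w \mapsto (X_w, Y_w)$ is position-wise: for $i \in [2n]$, the letter at position $i$ of $w$ is $a$ if and only if $z_i \in X_w \cup Y_w$. Consequently, splitting a word as $w = w_1 w_2 w_3$ with $|w_1| = n_1$, $|w_2| = n_2$, $|w_3| = n_3$ corresponds precisely to splitting its set $X_w \cup Y_w$ along the partition whose parts are $\Pi_0 := Z[n_1+1, n_1+n_2]$ and $\Pi_1 := Z \setminus \Pi_0$; more explicitly, the subset of $\Pi_0$ encodes $w_2$ and the subset of $\Pi_1$ encodes the pair $w_1 w_3$.

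For the forward direction, given a word rectangle $L$ with witnesses $L_1 \subseteq \Sigma^{n_1+n_3}$ and $L_2 \subseteq \Sigma^{n_2}$, I would let $T \subseteq \mathcal{P}(\Pi_0)$ be the image of $L_2$ under the position-wise bijection applied to middle blocks, and let $S \subseteq \mathcal{P}(\Pi_1)$ be the corresponding image of $L_1$ on the outer positions $[1,n_1] \cup [n_1+n_2+1, 2n]$. The position-wise compatibility observed above then gives $\{X_w \cup Y_w \mid w \in L\} = S \times T$ under the identification $S \times T = \{A \cup B \mid A \in S,\, B \in T\}$ from the preliminaries, which is exactly an $[n_1+1, n_1+n_2]$-set rectangle in the sense of Definition~\ref{def: set_rect}.

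For the converse, given an $[i,j]$-set rectangle $R = S \times T$ with $S \subseteq \mathcal{P}(Z \setminus Z[i,j])$ and $T \subseteq \mathcal{P}(Z[i,j])$, set $n_1 = i-1$, $n_2 = j-i+1$, $n_3 = 2n-j$ (noting $n_1 + n_2 + n_3 = 2n$), and invert the same bijection on each part: $T$ translates to an $L_2 \subseteq \Sigma^{n_2}$ and $S$ translates to an $L_1 \subseteq \Sigma^{n_1+n_3}$ whose elements are the concatenations of a prefix of length $n_1$ and a suffix of length $n_3$. The same position-wise identity then yields the word-rectangle decomposition of $L$ with parameters $n_1, n_2, n_3$.

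The only nontrivial point is the index bookkeeping, namely verifying that the interval $[n_1+1, n_1+n_2]$ covers exactly the positions contributing to the middle block $w_2$ in one direction, and that $i-1$, $j-i+1$, $2n-j$ give a valid triple of block lengths in the other; beyond this, the proof is pure unpacking of the two definitions and of the bijection $w \mapsto (X_w, Y_w)$, so no genuine obstacle arises.
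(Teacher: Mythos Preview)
Your proposal is correct and follows essentially the same approach as the paper. The paper makes the position-wise bijection concrete by padding with $b$'s---defining $S$ as the sets corresponding to words $w_1 b^{n_2} w_3$ with $w_1 w_3 \in L_1$ and $T$ as those corresponding to $b^{n_1} w_2 b^{n_3}$ with $w_2 \in L_2$---and then dismisses the converse as ``similar''; your more abstract phrasing of the same idea and your explicit treatment of the converse are equivalent.
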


\begin{proof}
Let 
\[L = \bigcup_{\substack{w_1\!w_3\in L_1: \\ |w_1| = n_1, |w_3| = n_3}} \{w_1\}\times L_2 \times \{w_3\}\text{ where }L_1 \subseteq \Sigma^{n_1+n_3}, L_2 \subseteq \Sigma^{n_2}.\]
be a rectangle with parameters $(L_1, L_2, n_1, n_2, n_3)$. Define 
\begin{align*}
S &:= \{(X_w, Y_w) \, \mid \, w = w_1 b^{n_2} w_3 \textit{ for some } w_1\!w_3 \in L_1\} \subseteq Z[1,n_1] \cup Z[n_1 + n_2 +1, 2n] \textit{ and}  \\
 T &:= \{(X_w, Y_w) \, \mid \, w = b^{n_1}w_2b^{n_3} \textit{ for some } w_2 \in L_2\} \subseteq Z[n_1+1, n_1 +n_2].
\end{align*} 
By construction $U \cap V = \emptyset$ for all $U \in S$, $V \in T$, so it is easy to verify that $S \times T = \{U \cup V \,\mid \, U \in S, V \in T\} = \{(X_w, Y_w) \, \mid \, w \in L\}$ and that this is a $[n_1+1, n_1 +n_2]$ rectangle. The converse direction is similar.
\end{proof}

Slightly abusing notation, we will from now on write $L_n$ to mean $\{(X_w, Y_w) \, \mid \, w \in L_n\}$. Also from now on we refer to ordered set-rectangles simply as ordered rectangles; that is, we will now completely switch to the set perspective. Seen this way, the language $L_n$ consists exactly of those pairs $(X_w, Y_w)$ of sets for which there is an $i$ such that $x_i\in X_w$ and $y_i\in Y_w$ are in the respective sets. Going one step further, if we identify the sets $X_w, Y_w$ with the corresponding index sets, $L_n$ consists of intersecting pairs of sets, so $L_n$ is essentially the complement of the famous set disjointness problem, which is widely regarded as the flagship problem of communication complexity, see e.g.~the survey~\cite{DBLP:conf/mfcs/Sherstov14}.

From the above discussion along with Proposition~\ref{prop:rectangles}, it follows that to prove Theorem~\ref{thm:lower} it suffices to prove the following proposition.

\begin{proposition} \label{prop: lb}
If $R_1, \dots, R_{\ell}$ is a set of ordered balanced rectangles such that $R_i$ and $R_j$ are disjoint for every $i \neq j$ and such that 
\begin{align*}
\bigcup_{i \in [\ell]} R_i = L_n,
\end{align*}
then $\ell = 2^{\Omega(n)}$.
\end{proposition}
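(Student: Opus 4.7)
The plan is to prove Proposition~\ref{prop: lb} via a discrepancy-style argument in the spirit of Razborov's corruption bound for set disjointness, adapted to the best-partition (interval) rectangle model following~\cite{DBLP:journals/siamcomp/Sherstov16}. I would exhibit disjoint subsets $A, B \subseteq L_n$ with $|A|$ exponentially larger than $|B|$, satisfying
\[
|R \cap A| \le |R \cap B| + 2^{-\Omega(n)} \cdot |L_n|
\]
for every balanced ordered rectangle $R$. Summing this bound over the disjoint cover $\{R_i\}_{i \in [\ell]}$ of $L_n$ gives
\[
|A| = \sum_i |R_i \cap A| \le \sum_i |R_i \cap B| + \ell \cdot 2^{-\Omega(n)} |L_n| = |B| + \ell \cdot 2^{-\Omega(n)} |L_n|,
\]
so that $\ell \ge 2^{\Omega(n)}$ whenever $|A| - |B| = \Omega(|L_n|)$.

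The per-rectangle bound is the main ingredient and I would prove it by case analysis on the interval $[i, j]$ inducing the partition $(\Pi_0, \Pi_1)$ of $R$. The key combinatorial observation is that balancedness guarantees at least $\Omega(n)$ \emph{cross} positions $k \in [n]$, where $x_k$ and $y_k$ lie on opposite sides of the partition. Indeed, writing $a = j-n$ and $b = n-i+1$, for a ``mixed'' interval with $i \le n < j$ the cross positions number $\min(a+b,\, 2n-a-b) \ge 2n/3$. For an interval $[i,j] \subseteq [1,n]$, every $k \in [i,j]$ is cross (with $x_k \in \Pi_0$ and $y_k \in \Pi_1$), yielding at least $2n/3$ cross positions; the case $[i,j] \subseteq [n+1, 2n]$ is symmetric. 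After restricting attention to pairs supported only on the cross positions, $R$ becomes a standard combinatorial rectangle for a disjunction of two set-disjointness instances on $\Omega(n)$ bits, where Razborov's classical rectangle-corruption argument yields the discrepancy estimate.

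The principal obstacle is that the sets $A, B$ must work \emph{uniformly} across all balanced interval partitions, since the partition varies from rectangle to rectangle in the cover; one cannot choose them partition-by-partition. I would take $A$ and $B$ to be invariant under permutations of $[n]$ (depending only on the joint statistics of $X$ and $Y$), which automatically ensures that their interaction with every interval partition looks structurally the same. A secondary subtlety lies at the extremes of the balancedness constraint (intervals of length close to $2n/3$ or $4n/3$), where the cross vs.\ same-side split is most lopsided: the same-side positions are not amenable to Razborov's argument directly, so their contribution to the discrepancy must be absorbed into the $2^{-\Omega(n)}$ error using the symmetry of $A$ and $B$ rather than the cross-positions analysis. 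This uniformity step, linking the local rectangle analysis to the global choice of $A, B$, is where I expect the bulk of the technical work to lie.
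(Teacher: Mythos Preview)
Your high-level framework---a discrepancy argument with distinguished sets $A,B$ and a per-rectangle bound summed over the disjoint cover---is exactly what the paper does. The divergence is in the concrete choice of $A,B$ and in how the ``same-side'' positions are handled, and that is where your plan has a genuine gap.

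The paper does \emph{not} take $A,B$ to be permutation-invariant. Instead it partitions $[2n]$ into blocks of size~$4$, restricts to the family $\Lset$ of pairs hitting each block in exactly one element, and lets $A$ (resp.~$B$) be those $(U,V)\in\Lset$ for which the number of indices $i$ with $x_i\in U$ and $y_i\in V$ is odd (resp.~even). This parity-based choice allows the per-rectangle bound for the base partition $[1,n]$ to be proved by a clean Cauchy--Schwarz/duplicate-variable calculation in the style of~\cite{DBLP:journals/siamcomp/Sherstov16} (Lemma~\ref{lem: simple_des}), not by Razborov corruption. Note also that in the paper $|A|$ and $|B\cap L_n|$ are of the same order; only their \emph{difference} is large compared to the per-rectangle discrepancy, which is weaker than the $|A|-|B|=\Omega(|L_n|)$ you ask for.

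For a general balanced interval partition, the paper first reduces to ``neat'' partitions that respect the block boundaries (Lemma~\ref{lem:makeneat}), and then---crucially---handles the same-side positions by \emph{decomposing the rectangle} over all assignments $\alpha$ to the same-side blocks (Lemma~\ref{lem: gen_disc}). Each piece $R^\alpha$ becomes a rectangle for a smaller instance supported only on cross positions, to which the base bound applies; the number of pieces is $4^{\#\text{same-side blocks}}$, and a short calculation shows this loss is dominated by the gain from having at least $2n/3$ cross positions.

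Your plan leaves exactly this step open. Permutation-invariance of $A,B$ does \emph{not} make the interaction with all interval partitions ``look the same'': different balanced partitions have different numbers of same-side positions, and on those positions a rectangle imposes no constraint relating $x_k$ to $y_k$, so their contribution to $|R\cap A|-|R\cap B|$ cannot be controlled by the cross-position (disjointness-style) analysis. You acknowledge this as the ``secondary subtlety'' but offer only ``symmetry of $A$ and $B$'' to absorb it, which is not a mechanism---some explicit conditioning or decomposition is required. The paper's block structure is precisely what makes such a decomposition tractable (finitely many assignments per block, product structure across blocks); with a purely statistics-based $A,B$ it is not clear how to carry this out.
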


We call a set of rectangles as in Proposition~\ref{prop: lb} a \emph{disjoint rectangle cover} for $L_n$. We will prove Proposition~\ref{prop: lb} in the next two subsections.

\subsection{A Lower Bound for Restricted Rectangles}

Before showing Proposition~\ref{prop: lb} in full generality, it will be useful to first consider a special case in which all rectangles share a particular partition.

\begin{theorem} \label{thm: base}
Any disjoint rectangle cover of $L_n$ by $[1,n]$-rectangles has size  $2^{\Omega(n)}$. 
\end{theorem}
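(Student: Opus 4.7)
The plan is to interpret a disjoint $[1,n]$-rectangle cover of $L_n$ as a rank-$1$ decomposition of an associated $\{0,1\}$-matrix, and then bound the real rank of that matrix from below. Define $N\in\{0,1\}^{\mathcal{P}(X)\times\mathcal{P}(Y)}$ by setting $N[U,V]=1$ iff $U\cup V\in L_n$, i.e., iff there exists $i\in[n]$ with $x_i\in U$ and $y_i\in V$. By Definition~\ref{def: set_rect}, a $[1,n]$-rectangle $S\times T\subseteq L_n$ with $S\subseteq\mathcal{P}(X)$, $T\subseteq\mathcal{P}(Y)$ is exactly an all-ones combinatorial subrectangle of $N$. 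Hence a disjoint cover $L_n=R_1\sqcup\dots\sqcup R_\ell$ by $[1,n]$-rectangles is precisely a decomposition $N=\sum_{i=1}^{\ell} M_i$ as a sum of $\ell$ pairwise-disjoint rank-$1$ $\{0,1\}$-matrices, and subadditivity of rank gives $\ell\ge\mathrm{rank}_{\mathbb{R}}(N)$.

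The main step is then to show $\mathrm{rank}_{\mathbb{R}}(N)=2^{\Omega(n)}$. I would write $N=J-D$, where $J$ is the all-ones matrix and $D[U,V]=\mathbf{1}[U\cap V=\emptyset]$, identifying $x_i$ and $y_i$ with the common index $i$. Expanding the product
\[
D[U,V]=\prod_{i=1}^{n}\bigl(1-\mathbf{1}[x_i\in U]\,\mathbf{1}[y_i\in V]\bigr)=\sum_{S\subseteq[n]}(-1)^{|S|}\chi_S(U)\,\chi_S(V),
\]
with $\chi_S(U)=\prod_{i\in S}\mathbf{1}[x_i\in U]$, exhibits $D$ in terms of the monomial basis of real-valued functions on $\{0,1\}^n$. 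Linear independence of the $\chi_S$ (equivalently, triangularity with respect to any linear extension of set inclusion) forces $\mathrm{rank}_{\mathbb{R}}(D)=2^n$, and therefore $\mathrm{rank}_{\mathbb{R}}(N)\ge\mathrm{rank}_{\mathbb{R}}(D)-\mathrm{rank}_{\mathbb{R}}(J)=2^n-1$. Combining, $\ell\ge 2^n-1=2^{\Omega(n)}$.

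I do not anticipate a real obstacle here: the rank of the disjointness matrix is a classical calculation, and the translation between $[1,n]$-rectangles and combinatorial rectangles is immediate because a single partition $X\mid Y$ is fixed throughout. The genuine difficulty is not in Theorem~\ref{thm: base} but in the next step, Proposition~\ref{prop: lb}, where the cover may combine rectangles induced by many different intervals $[i,j]$: then no single communication matrix governs the situation, and one is pushed toward a more robust technique such as the discrepancy-style argument alluded to in the introduction.
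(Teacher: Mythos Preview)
Your argument is correct, and it is precisely the rank bound that the paper itself names as yielding Theorem~\ref{thm: base} immediately. The paper, however, does \emph{not} present this proof. Instead it develops a discrepancy argument: one restricts attention to the family $\Lset$ of pairs $(U,V)$ meeting each length-$4$ block $I_i$ in exactly one element, splits $\Lset$ into $A$ (pairs with an odd number of matching indices) and $B=\Lset\setminus A$, shows that $|A\cap L_n|-|B\cap L_n|>2^{7m/2}$, and then proves $\big||R\cap A|-|R\cap B|\big|\le 2^{3m}$ for every $[1,n]$-rectangle $R$ via a Cauchy--Schwarz style second-moment computation in the spirit of Sherstov. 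Summing over a disjoint cover then forces $\ell=2^{\Omega(m)}$.

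The reason for this longer route is exactly the one you anticipate in your last paragraph: the rank bound is tied to the single partition $X\mid Y$ and does not survive when different rectangles may use different intervals, whereas the per-rectangle discrepancy bound does (after some bookkeeping with the ``good'' indices split by the partition) and is what drives the proof of Proposition~\ref{prop: lb}. So your proof buys a clean, essentially one-line argument for the fixed-partition statement; the paper's proof buys a template that lifts to the multi-partition setting needed for the main theorem.
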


Theorem~\ref{thm: base} is an immediate consequence of the so-called rank bound from communication complexity pioneered in~\cite{DBLP:conf/stoc/MehlhornS82}, see e.g.~\cite[Chapter~2]{RaoY2020} for a textbook introduction.
However, to prove Proposition~\ref{prop: lb} we need to generalise Theorem~\ref{thm: base}, which unfortunately cannot directly be done using known results from communication complexity: in contrast to most of the literature, Proposition~\ref{prop: lb} is not in the standard setting in which the partitions for all rectangles are fixed. Rather, different rectangles may use different partitions. This setting is called \emph{multi-partition communication complexity}~\cite{DBLP:journals/iandc/DurisHJSS04} and is far less studied. In particular, there is only one known lower bound that explicitly works for disjoint rectangle covers~\cite{DBLP:journals/tcs/Sauerhoff03}, which we cannot directly apply here. 

As a first step to proving Proposition~\ref{prop: lb} we thus give a new proof of Theorem~\ref{thm: base}. This proof will be more involved then the usual approach, but it has the advantage that we can later generalise it to Proposition~\ref{prop: lb}. We use a discrepancy argument in the spirit of that given by Sherstov in the context of multiparty communication complexity\footnote{We warn the reader that despite superficially similar names \emph{multiparty} and \emph{multipartition} communication complexity are very different concepts!}~\cite{DBLP:journals/siamcomp/Sherstov16}. The idea is to  define disjoint $A, B \subset L_n$, such that (1) any $[1,n]$-rectangle contains roughly the same number of sets from $A$ as from $B$ and (2) $L_n$ contains many more words from $A$ than $B$. Therefore, the size of any disjoint cover for~$L_n$ by $[1,n]$-rectangles must be large.

Informally, we split $X \cup Y$ into intervals of size four and then define $\Lset$ to be those sets $(U,V)$ containing exactly one element from each interval. Then $A$ is defined to be the set of $(U,V) \in \Lset$ such that the number of $i$ with $x_i \in U$ and $y_i \in V$ is odd and $B := \Lset \setminus A$. We next give the formal definition.

So set $m:= n/4$; note that $m$ is an integer since by assumption $n$ is divisible by four. Then for $i \in [m]$ we define:
\begin{align*}
 I_i^{X} :=& X[4(i-1)+1, 4i]\\
 \Lset_i^X :=& \{ U \subset I_i^X \, \mid \, |U|=1\} \\
 \Lset^X :=& \{ U \subset X \, \mid \, |U \cap I_i| = 1 \textit{ for every } i \in [m]\}.
 \end{align*}
We analogously define $I_i^Y$, $\Lset_i^Y$ and $\Lset^Y$ exactly as above but with $Y$ playing the role of~$X$. Then we set $\Lset := \Lset^X \times \Lset^Y$ and $\Lset_i = \Lset_i^{X} \times \Lset_i^Y$. Since we often want to take a unified perspective we define for $i \in [m]$, $I_i := I_i^X$ and $I_{i+m} = I_i^{Y}$. We call every $I_j$ an \emph{interval}.

We next show that $\Lset$ is big and that $L_n$ contains many more words from $A$ than from $B$. Recall, that $A \subset \Lset$ contains those $(U,V) \in \Lset$ such that the number of $i$ with $x_i \in U$ and $y_i \in V$ is odd and $B := \Lset \setminus A$.

\begin{lemma} \label{lem: sizes}
The following holds:
\begin{enumerate}
\item $|\Lset| = 2^{4m} $. \label{eq:lsize}
\item $|A \cap L_n| - | B \cap L_n| = |A| - |B \cap L_n| > 2^{\frac{7m}{2}}$. \label{eq:ABsize}
\end{enumerate}
\end{lemma}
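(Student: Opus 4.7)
The plan is to reduce both parts to elementary per-interval counts that collapse to closed forms via the binomial theorem.

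For (\ref{eq:lsize}) I would simply observe that $\Lset^X$ factors as an independent choice over each of the $m$ intervals, with $4$ singletons per interval, so $|\Lset^X| = 4^m$; likewise $|\Lset^Y| = 4^m$, and hence $|\Lset| = 16^m = 2^{4m}$.

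For (\ref{eq:ABsize}) I would first note that $A \subseteq L_n$: every $(U, V) \in A$ has an odd---hence positive---number of indices $i$ with $x_i \in U$ and $y_i \in V$, so $(U, V) \in L_n$. This immediately gives $|A \cap L_n| = |A|$, yielding the first equality. To set up the count, I would parametrise each $(U, V) \in \Lset$ by writing $U \cap I_j^X = \{x_{a_j}\}$ and $V \cap I_j^Y = \{y_{b_j}\}$, so that $a_j, b_j \in \{4(j-1)+1, \ldots, 4j\}$. Because $x_i \in U$ and $y_i \in V$ forces $i$ to lie in a single interval $j$ with $a_j = b_j = i$, the number of such indices $i$ is exactly $k(U, V) := |\{j \in [m] : a_j = b_j\}|$. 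Since per interval $4$ of the $16$ possible $(a_j, b_j)$ pairs are diagonal, the number of $(U, V) \in \Lset$ with $k(U,V) = k$ equals $\binom{m}{k} 4^k 12^{m-k}$.

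The main computation is then
\[
|A| - |B| \;=\; -\sum_{k=0}^{m} \binom{m}{k} (-1)^k 4^k 12^{m-k} \;=\; -(12 - 4)^m \;=\; -8^m,
\]
where the sign pattern arises because odd $k$ contributes to $|A|$ and even $k$ to $|B|$, and the collapse is the binomial theorem. The sets in $\Lset \setminus L_n$ are precisely those with $k(U, V) = 0$, accounting for $12^m$ sets, all of which lie in $B$; thus $|B \cap L_n| = |B| - 12^m$ and
\[
|A| - |B \cap L_n| \;=\; (|A| - |B|) + 12^m \;=\; 12^m - 8^m.
\]

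Finally, to match the stated bound I would verify $12^m - 8^m > 2^{7m/2}$: since $2^{7/2} = 8\sqrt{2} < 12$ and $12^m - 8^m \geq \tfrac{1}{2} \cdot 12^m$ for $m \geq 2$, this holds for all sufficiently large $m$, which is fine because Theorem~\ref{thm:lower} only needs $n$ large. I do not anticipate a conceptual obstacle here; the one step that requires a moment's care is spotting the sign pattern that turns the count into an alternating binomial sum so the binomial theorem collapses it cleanly to $-8^m$---everything else is then bookkeeping.
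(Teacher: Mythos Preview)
Your proposal is correct and follows essentially the same approach as the paper: compute $|B|-|A|$ via the binomial expansion of $(12-4)^m$, identify $B\setminus L_n$ as the $12^m$ pairs with no diagonal coincidence, and combine to get $12^m-8^m$. Your write-up is slightly more explicit in parametrising by $k(U,V)$ and in justifying the final inequality, but there is no substantive difference.
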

\begin{proof}

For \eqref{eq:lsize} observe that in total there are $2m$ intervals each of length four and that for every such interval exactly one element is chosen for every $(U,V)\in \Lset$, so $|\Lset| = 4^{2m} = 2^{4m}$. 

The equality in \eqref{eq:ABsize} is trivial since $A \subset L_n$. To prove the inequality it suffices to show that $|B \setminus L_n| = 12^m$ and that $|B|- |A|= 2^{3m}$. To see this assume the aforementioned equalities, then
\begin{align*}
|A| - | B \cap L_n| &= |A| - (|B| - |B \setminus L_n|) \\ 
&= |B \setminus L_n| - (|B| - |A|) \\
&= 12^m - 2^{3m}  > 2^{\frac{7m}{2}},
\end{align*} 
where in the last step we use that $n$ (and thus $m$) is sufficiently big.
 
To show that $|B \setminus L_n| = 12^m$, consider for each $i\in [m]$ the $16$ sets $(U_i,V_i)\in \Lset_i$. By the definition of $\Lset_i$, the sets $U_i$ and $V_i$ contain a single element, respectively, so $U_i=\{x_j\}$, $V_i=\{y_k\}$, for some appropriate $j,k$. There are exactly four combinations such that $j=k$ and $12$ with $j\ne k$. Combining the latter for all $i\in [m]$ gives exactly $B \setminus L_n$, so $|B \setminus L_n| = 12^{m}$.

Finally, $|B|- |A|= 2^{3m}$ is true since
 \begin{align*}
2^{3m} &= (12 - 4)^{m} = \sum_{i=0}^m \binom{m}{i} 12^{m-i}(-4)^{i} \\
&= \sum_{i=0}^{\lfloor m/2 \rfloor}\left(\binom{m}{2i} 12^{m-2i} 4^{2i}\right ) - \sum_{i=1}^{\lceil m/2 \rceil}\left(\binom{m}{2i-1} 12^{m-(2i-1)} 4^{2i-1}\right ) \\
&= |B| - |A|. \qedhere
\end{align*} 
\end{proof}

The key to proving Theorem~\ref{thm: base} is the following: in Lemma~\ref{lem: sizes}, we have seen that there are many more sets in $A \cap L_n$ than in $B \cap L_n$. However, we will next show that each $[1,n]$-rectangle cannot contain many more sets of $A$ than $B$. Therefore, any disjoint cover of $L_n$ by $[1,n]$-rectangles must by big.

\begin{lemma} \label{lem: simple_des}
Let $R = S \times T$ be a $[1,n]$-rectangle. Then 
\[ \big||R \cap A| - |R \cap B|\big| \le 2^{3m}. \]
\end{lemma}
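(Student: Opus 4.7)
The plan is to reinterpret the signed difference as a bilinear form and bound it by a spectral estimate, exploiting the product structure of the intervals. Since $A \cup B = \Lset$ and $R = S \times T$, we have $R\cap A = (S\cap \Lset^X)\times(T\cap \Lset^Y)$ restricted to those $(U,V)$ where the number of matching indices $i$ with $x_i\in U$ and $y_i\in V$ is odd, and analogously for $R \cap B$. Writing $S' := S\cap \Lset^X$, $T' := T\cap \Lset^Y$, and defining, for $(U,V)\in \Lset$, the sign $\chi(U,V) := (-1)^{k(U,V)}$ where $k(U,V)$ is the number of matching indices, I would first establish the identity
\[
|R\cap B| - |R\cap A| \;=\; \sum_{U\in S',\, V\in T'} \chi(U,V).
\]
So the goal becomes bounding the absolute value of this sum by $2^{3m}$.

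The key structural observation is that because any $U\in \Lset^X$ (respectively $V\in \Lset^Y$) chooses exactly one element from each interval $I_i^X$ (respectively $I_i^Y$), the matching count decomposes as $k(U,V) = \sum_{i=1}^{m} k(U_i,V_i)$ with $k(U_i,V_i)\in\{0,1\}$, and hence
\[
\chi(U,V) \;=\; \prod_{i=1}^{m}\chi(U_i,V_i).
\]
Identifying each $U_i \in \Lset_i^X$ and $V_i\in \Lset_i^Y$ with an index in $[4]$, this means the $4^m\times 4^m$ matrix $M$ with entries $M_{U,V} := \chi(U,V)$ is a tensor power $M = M_1^{\otimes m}$ where $M_1$ is the $4\times 4$ matrix with $-1$ on the diagonal and $+1$ off-diagonal, i.e.\ $M_1 = J - 2I$.

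With this in hand, the sum becomes the bilinear form $\mathbf{1}_{S'}^\top M\, \mathbf{1}_{T'}$, and I would finish by applying Cauchy--Schwarz:
\[
\bigl|\mathbf{1}_{S'}^\top M\, \mathbf{1}_{T'}\bigr| \;\le\; \|M\|_{\mathrm{op}} \cdot \sqrt{|S'|\cdot|T'|}.
\]
Since $M_1 = J-2I$ has eigenvalues $2, -2, -2, -2$, we get $\|M_1\|_{\mathrm{op}} = 2$, and by the tensor structure $\|M\|_{\mathrm{op}} = 2^m$. Combined with the trivial bound $|S'|, |T'|\le |\Lset^X| = |\Lset^Y| = 4^m$, this yields $\bigl|\mathbf{1}_{S'}^\top M\, \mathbf{1}_{T'}\bigr| \le 2^m \cdot 4^m = 2^{3m}$, as desired.

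The main conceptual hurdle is spotting that the sign function tensorises across the interval decomposition of $\Lset$; once this is recognised, the rest is a standard spectral/Cauchy--Schwarz bound. A secondary point worth checking carefully is that a $[1,n]$-rectangle really does respect the split between $X$ and $Y$ in the right way, so that the sum factors as a product over $S'$ and $T'$ independently; this follows directly from $\Pi_0 = Z[1,n] = X$ and $\Pi_1 = Z[n+1,2n] = Y$ in Definition~\ref{def: set_rect}. Everything else is routine linear algebra.
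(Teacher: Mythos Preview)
Your proof is correct and clean. Both your argument and the paper's rest on the same two ingredients---the tensorisation $\chi(U,V)=\prod_{i=1}^m\chi(U_i,V_i)$ across intervals and a Cauchy--Schwarz step---but the packaging differs. You phrase the problem as bounding a bilinear form $\mathbf{1}_{S'}^\top M\,\mathbf{1}_{T'}$ and compute the operator norm of $M=M_1^{\otimes m}$ directly from the eigenvalues of $M_1=J-2I$; this is short and makes the role of the interval size explicit. The paper instead writes the (normalised) squared difference as an expectation, applies Cauchy--Schwarz in the $X$-variable, introduces an independent copy $Y'$, and then shows the inner expectation $\E_X[(-1)^{\chi_A(X,Y)+\chi_A(X,Y')}]$ vanishes whenever $Y\neq Y'$. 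Unwinding that computation one sees it is equivalent to verifying $M^\top M=4^m I$, hence the same spectral bound in disguise. Your route is arguably more transparent for this fixed-partition lemma; the paper's probabilistic formulation is chosen because it follows Sherstov's template and adapts more directly to the later generalisation (Lemma~\ref{lem: gen_disc}), where one conditions on the ``bad'' part $\Pi_1^b$ before rerunning the argument on the remaining good coordinates.
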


\begin{proof}
The key idea is to express $\big||R \cap A| - |R \cap B|\big|$ in terms of an appropriate expectation calculation. To set this up we need some definitions. 

Define for each $i\in [m]$ random variables $X_i, Y_i$ that take a value uniformly at random from $\Lset_i^{X}$ and $\Lset_i^{Y}$ respectively. We let all these random variables be independent and define new random variables $X := \bigcup_{i=1}^m X_i$, $Y = \bigcup_{i =1}^m Y_i$. Note that $X$ has the uniform distribution over $\Lset^{X}$ and $Y$ the uniform distribution on $\Lset^{Y}$. Moreover, $X$ and $Y$ are independent. Let $Y'$ be an independent copy of~$Y$.

For $L \subseteq \mathcal{P}(Z)$ we write $\chi_L$ for the indicator function of $L$, i.e. the function $\chi_L \colon \mathcal{P}(Z)\to \{0,1\}$ that evaluates to one exactly on inputs contained in $L$. For $U \in \Lset^X$ and $V \in \Lset^{Y}$ let $\Int(U,V) := 1$ if there is some $i$ such that $x_i \in U$ and $y_i \in V$, otherwise $\Int(U,V) :=0$. Note that if $U \in \Lset^X$ and $V \in \Lset^Y$ then $(-1)^{\chi_A(U,V)} = (-1)^{\sum_{i=1}^m \Int(U_i, V_i)}$, where $U_i := U \cap I_i^{X}$ and $V_i := V \cap I_i^{Y}$ for $i \in [m]$. 
The following calculation borrows heavily from \cite[Lemma 5.9]{RaoY2020} which in turn is inspired by an approach pioneered in~\cite{DBLP:journals/siamcomp/Sherstov16}.
\begin{align*}
\left(\frac{|R \cap A| - |R \cap B|}{ 2^{4m} }\right)^2 =& \left( \E_{X,Y} \left[\chi_R(X,Y) \cdot (-1)^{\chi_{A}(X,Y)}\right] \right)^2 \\
=& \left(\E_{X,Y } \left[ \chi_S(X) \cdot \chi_T(Y) \cdot (-1)^{\chi_{A}(X,Y)}\right] \right)^2 \\
\le& \E_{X} \left[ \chi_{S}(X)^2 \left( \E_{Y} \left[ \chi_T(Y) \cdot  (-1)^{\chi_{A}(X,Y)}\right] \right)^2 \right] \\
\le&  \E_{X, Y, Y'} \left[\chi_T(Y) \cdot \chi_T(Y') \cdot (-1)^{\chi_{A}(X,Y) +\chi_{A}(X,Y')}\right]\\
\le& \E_{Y, Y' }  \left[ \left| \E_{X} \left[(-1)^{\chi_{A}(X,Y) +\chi_{A}(X,Y')}\right] \right| \right] \\
=& \E_{Y, Y'} \left[ \left| \E_{X }\left[(-1)^{\sum_{i=1}^m \Int(X_i, Y_i) +\Int(X_i,Y'_i)}\right] \right | \right]
\end{align*}

Now fix $Y, Y'$. We claim that if $Y \neq Y'$ the inner expectation must evaluate to zero. First note that the probability that $Y = Y'$ is $1/|\Lset^{Y}| = 2^{-2m}$. Therefore, the claim implies the result via a simple calculation. We now prove the claim. 

So fix $Y \neq Y'$ while $X$ remains random as before. Let $C$ be the event that $\sum_{i=1}^m \Int(X_i, Y_i) +\Int(X_i,Y'_i)$ is even and $C_i$ be the event that $\Int(X_i, Y_i) +\Int(X_i,Y'_i) = 1$. Then $C$ is true if and only if the number of $i \in [m]$ such that $C_i$ is true is even. Note that if $Y_i = Y_i'$ then $C_i$ is false. So suppose $Y_i \neq Y_i'$. Since all intervals are of size four and $|X_i| = |Y_i| =|Y'_i| = 1$, we have $\mathbb{P}(C_i)= \frac{1}{2}$. Let $\mathcal{I} := \{i \in [m] \, \mid \, Y_i \neq Y_i'\}$ and $\alpha := |\mathcal{I}|$. Then
\begin{align*}
\mathbb{P}(C) = \sum_{i=0}^{\lfloor \frac{\alpha}{2} \rfloor} \binom{\alpha}{2i} 2^{-2i} 2^{-(\alpha- 2i)} = 2^{-\alpha} \sum_{i=0}^{\lfloor \frac{\alpha}{2} \rfloor} \binom{\alpha}{2i} = 2^{-\alpha} 2^{\alpha-1} = \frac{1}{2},
\end{align*}
\noindent where we use the fact that the sum of $\binom{\alpha}{i}$ over all even values of $i$ is $2^{\alpha-1}$. The claim and therefore the result follows.
\end{proof}

From Lemma~\ref{lem: simple_des}, we can deduce Theorem~\ref{thm: base}. In detail, let $R_1, \dots , R_\ell$ be a disjoint set of $[1,n]$-rectangles whose union is $L_n$. Then 
\begin{align*}
2^{\frac{7m}{2}} &< |A \cap L_n| - | B \cap L_n| \\
 &= \sum_{i=1}^\ell |A \cap R_i| - | B \cap R_i| \\ 
&\le  \ell \cdot 2^{3m},  
\end{align*}
where the first inequality is Lemma~\ref{lem: sizes}\eqref{eq:ABsize}, the equality uses the fact the rectangles form a cover and are disjoint, and the second inequality follows by Lemma~\ref{lem: simple_des}. It follows that $\ell = 2^{\Omega(m)} = 2^{\Omega(n)}$ which completes the proof of Theorem~\ref{thm: base}. 

\subsection{Proof of Theorem~\ref{thm:lower}}

To prove Proposition~\ref{prop: lb}, and therefore Theorem~\ref{thm:lower}, it suffices to prove an analogue of Lemma~\ref{lem: simple_des} that applies to all balanced ordered rectangles. In fact we have already done much of the work. The main point is that for every balanced ordered partition $(\Pi_0, \Pi_1)$ and `most' values of $\ell$, we have that $x_{\ell}$ and $y_{\ell}$ are on different sides of the partition. We now formalise the above idea, beginning with the nice case where the partition splits $x_{\ell}$ and $y_{\ell}$ for every $\ell$. Here using symmetry we may deduce the following corollary to the proof of Lemma~\ref{lem: simple_des}. 

\begin{corollary} \label{cor: disc}
Let $R = S \times T$ be an $[i,j]$-rectangle such that $j-i = n-1$. Then 
\[ \big||R \cap A| - |R \cap B|\big| \le 2^{3m} \]
\end{corollary}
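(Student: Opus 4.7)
The plan is to reuse the argument of Lemma~\ref{lem: simple_des} via a relabeling of random variables. The key structural fact about $j - i = n - 1$ is that the induced partition $(\Pi_0, \Pi_1)$ splits every pair $\{x_\ell, y_\ell\}$ between the two sides (exactly one of $x_\ell, y_\ell$ lies in each part), so combinatorially $(\Pi_0, \Pi_1)$ plays the exact role that $(X, Y)$ played in the original lemma, only with the labels shuffled. This is also consistent with the authors' claim that the corollary follows from Lemma~\ref{lem: simple_des} ``using symmetry''.

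In the aligned case where $i \equiv 1 \pmod{4}$, each original interval $I_k^X$ (resp.\ $I_k^Y$) lies entirely within one side of the partition. For each $k \in [m]$ I would define $A_k$ to be whichever of $X_k, Y_k$ lives in $\Pi_0$ and $B_k$ to be the other. Then $A_k$ and $B_k$ are independent (being exactly $X_k, Y_k$ up to a swap), $(X \cup Y) \cap \Pi_0 = \bigcup_k A_k$ and $(X \cup Y) \cap \Pi_1 = \bigcup_k B_k$, and the indicator $\Int(X_k, Y_k)$ rewrites as a symmetric function $\Int'(A_k, B_k)$ with the same distribution, since $\Int$ only tests whether the two chosen singletons share their position-index in $[4(k-1)+1, 4k]$. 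Since $R = S \times T$ is an ordered $(\Pi_0, \Pi_1)$-rectangle we then have $\chi_R(X, Y) = \chi_S\bigl(\bigcup_k A_k\bigr) \cdot \chi_T\bigl(\bigcup_k B_k\bigr)$ and $(-1)^{\chi_A(X, Y)} = \prod_k (-1)^{\Int'(A_k, B_k)}$. Substituting into the calculation of Lemma~\ref{lem: simple_des} with $A_k, B_k$ in place of $X_k, Y_k$, the Cauchy--Schwarz step and the subsequent coin-flipping computation go through verbatim because they depend only on the singleton-within-interval structure, which is insensitive to the $X$-vs-$Y$ labeling. This yields the desired bound $\big||R \cap A| - |R \cap B|\big| \le 2^{3m}$.

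The main obstacle I anticipate is handling arbitrary $i$: when $i \not\equiv 1 \pmod{4}$ there is a unique index $k^* = \lceil i/4 \rceil$ at which both $I_{k^*}^X$ and $I_{k^*}^Y$ are split by the partition boundary, so the naive definition of $A_{k^*}$ above could be empty or of size two, breaking the singleton structure that the proof of Lemma~\ref{lem: simple_des} relies on. I would address this by conditioning on the values of $X_{k^*}$ and $Y_{k^*}$: once these are fixed, the remaining $2(m-1)$ random variables exhibit the product structure required by the aligned argument, and the constant number of conditioned intervals contributes only a bounded multiplicative factor that is absorbed into the $2^{3m}$ bound. An alternative, perhaps cleaner, route would be to re-derive $\Lset$ with intervals tailored to $(\Pi_0, \Pi_1)$ so that no splitting ever occurs; verifying that Lemma~\ref{lem: sizes} continues to hold with these adjusted intervals is then the residual technicality.
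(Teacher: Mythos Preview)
Your aligned case is exactly the ``symmetry'' the paper has in mind: relabel so that for each $k$ one of $X_k,Y_k$ plays the role of the $\Pi_0$-variable and the other the $\Pi_1$-variable, observe that $\Int$ is symmetric in its two arguments, and rerun the Cauchy--Schwarz computation of Lemma~\ref{lem: simple_des} verbatim. That is all the paper claims, and it is correct.

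Where your proposal diverges from the paper is in worrying about $i\not\equiv 1\pmod 4$. This concern is legitimate in the abstract---when an interval $I_{k^*}^X$ straddles the boundary, $\Lset$ no longer factors across $(\Pi_0,\Pi_1)$, so the first line of the Cauchy--Schwarz chain fails---but it is moot for the paper. Corollary~\ref{cor: disc} is invoked only inside the proof of Lemma~\ref{lem: gen_disc}, and there the partition has already been made \emph{neat} (Lemma~\ref{lem:makeneat}), so after the renaming to $L_{|G|}$ the interval endpoints fall on multiples of $4$ and only the aligned case arises. In other words, the paper silently uses the corollary only under the extra hypothesis your argument already handles.

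Two small comments on your proposed fixes for the misaligned case, should you still want to cover it. First, conditioning on $(X_{k^*},Y_{k^*})$ yields $16$ sub-rectangles over $m-1$ interval pairs, giving $16\cdot 2^{3(m-1)} = 2\cdot 2^{3m}$; the extra factor of $2$ is \emph{not} absorbed into the stated bound, so you would only get $2^{3m+1}$ (harmless downstream, but not the exact inequality as written). Second, ``re-deriving $\Lset$ with intervals tailored to $(\Pi_0,\Pi_1)$'' changes the sets $A,B$ themselves, whereas the corollary is about the fixed $A,B$ built from the original intervals; so that route does not directly prove the statement as phrased.
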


To move from this to our more general case, we first elucidate some properties of balanced ordered partitions. So fix such a $(\Pi_0, \Pi_1)$. Let $G$ be the set containing those $i \in [n]$ such that $x_i$ and~$y_i$ are in different parts of the partition, i.e.~$G$ contains those $i \in [n]$ such that exactly one of~$x_i$ and~$y_i$ lies in $\Pi_0$. Let $V_G$ contain all those $x_i$ and $y_i$ such that $i\in G$. Moreover, let $I_G$ contain all intervals $I_i$ such that $I_i \subseteq V_G$.

It will be convenient to assume that $(\Pi_0, \Pi_1)$ is well-behaved with respect to the intervals $I_{\ell}$ in the following sense: we call $(\Pi_0, \Pi_1)$ \emph{neat} if for every interval $I_{\ell}$ we have that $I_{\ell} \subseteq \Pi_0$ or $I_{\ell} \subseteq \Pi_1$. In the remainder, we can largely restrict ourselves to neat partitions due to the following result.

\begin{lemma}\label{lem:makeneat}
  Let $R$ be an ordered balanced $(\Pi_0, \Pi_1)$-rectangle. Then there there is a neat ordered balanced partition $(\Gamma_0, \Gamma_1)$ and a set of disjoint $(\Gamma_0, \Gamma_1)$-rectangles $R_1, R_2, \ldots, R_k$ with $k \le 256$ such that $R = \bigcup_{\ell \in [k]}R_{\ell}$.
\end{lemma}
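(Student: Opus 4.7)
The plan is to isolate the bounded-size region where $(\Pi_0, \Pi_1)$ fails to be neat and split $R$ according to how its members behave on that region. Since $(\Pi_0, \Pi_1)$ is induced by an interval $[i,j]$ and each $I_\ell$ is itself a contiguous block of four consecutive elements within $X$ or within $Y$, the only intervals that can straddle the partition are the one containing $z_i$ and the one containing $z_j$. Thus there are at most two \emph{split intervals} $I_a, I_b$, and all other intervals lie entirely in $\Pi_0$ or entirely in $\Pi_1$. For $c\in\{a,b\}$ set $I_c^0 := I_c \cap \Pi_0$ and $I_c^1 := I_c \cap \Pi_1$; by the split assumption both are non-empty, so $|I_c^0|, |I_c^1| \in [1,3]$.

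To build $(\Gamma_0, \Gamma_1)$ I would decide, independently for each split interval $I_c$, whether to absorb it into $\Gamma_0$ (adding $I_c^1$ to the $\Pi_0$-side) or into $\Gamma_1$ (moving $I_c^0$ to the $\Pi_1$-side). Geometrically this merely shifts each endpoint of $[i,j]$ to the nearest interval boundary, so the resulting partition is automatically ordered and neat. Each absorbing choice perturbs $|\Pi_0|$ by at most $3$, so the total perturbation is $O(1)$; since the balance interval $[2n/3, 4n/3]$ has width $\Omega(n)$ and the set of four reachable perturbations contains values of both signs, for sufficiently large $n$ one can always pick directions so that $|\Gamma_0|, |\Gamma_1| \in [2n/3, 4n/3]$.

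For the decomposition, for each tuple $\tau = (W_0, W_1, W_0', W_1') \in \mathcal{P}(I_a^0) \times \mathcal{P}(I_a^1) \times \mathcal{P}(I_b^0) \times \mathcal{P}(I_b^1)$ I would set
\[
R_\tau := \{U \in S : U \cap I_a^0 = W_0,\ U \cap I_b^0 = W_0'\} \times \{V \in T : V \cap I_a^1 = W_1,\ V \cap I_b^1 = W_1'\}.
\]
Distinct tuples give sub-rectangles with distinct restrictions to $I_a \cup I_b$, so the $R_\tau$ are pairwise disjoint, and any $(U,V) \in R$ lies in the unique $R_\tau$ determined by its intersections with the $I_c^k$. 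Hence $\{R_\tau\}_\tau$ partitions $R$, and there are $2^{|I_a^0|+|I_a^1|+|I_b^0|+|I_b^1|} = 2^{|I_a|+|I_b|} \le 2^8 = 256$ of them (with fewer factors if $0$ or $1$ intervals are split).

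Finally, I would verify each $R_\tau$ is a $(\Gamma_0, \Gamma_1)$-rectangle by rearranging its product structure according to the chosen absorbing directions. For example, if both split intervals are absorbed into $\Gamma_0$, then every $U \cup V \in R_\tau$ decomposes uniquely as $(U \cup W_1 \cup W_1') \cup (V \setminus (I_a^1 \cup I_b^1))$; as $U$ and $V$ vary independently within the defining constraints, the two summands range independently over subsets of $\Gamma_0$ and $\Gamma_1$ respectively, exhibiting $R_\tau$ as a $(\Gamma_0, \Gamma_1)$-rectangle. The other three combinations of absorbing directions admit analogous rewrites, with $W_0$ or $W_0'$ crossing sides in place of $W_1$ or $W_1'$. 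The main obstacle here is really just the balance bookkeeping in the construction of $(\Gamma_0, \Gamma_1)$; once that is handled, the rest is a direct application of the definitions.
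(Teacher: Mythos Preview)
Your proposal is correct and follows essentially the same approach as the paper: identify the at most two intervals straddling the boundary of $[i,j]$, split $R$ into at most $2^8=256$ sub-rectangles by fixing the restriction to these (at most) eight elements, and observe that each such sub-rectangle is a rectangle for the neatened partition. The only notable difference is in how balance is ensured for $(\Gamma_0,\Gamma_1)$: you argue by choosing absorbing directions carefully, whereas the paper simply assumes $|\Pi_0|\le|\Pi_1|$ and always absorbs both split intervals into $\Pi_0$, which immediately gives $|\Gamma_0|\le |\Pi_0|+8\le n+8\le \tfrac{4n}{3}$ and $|\Gamma_0|\ge|\Pi_0|\ge \tfrac{2n}{3}$, avoiding the case analysis on directions.
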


\begin{proof}
Suppose w.l.o.g.\! that $|\Pi_0| \le |\Pi_1|$. There are at most two intervals $I_i, I_j$ which violate neatness, i.e.\! that contain elements from $\Pi_0$ and $\Pi_1$. Let $(\Gamma_0, \Gamma_1)$ be the partition of $Z$ that we get by putting all elements from $I_i$ and $I_j$ into $\Pi_0$. Then, by construction, $(\Gamma_0, \Gamma_1)$ is neat and ordered. Moreover, it is balanced, since $\Gamma_0$ gained at most the $8$ elements from $I_i\cup I_j$, so $|\Gamma_0| \le |\Pi_0| + 8 \le n + 8\le \frac{2n}{3}$, where for the last inequality we use that $n$ is sufficiently big. 

Now consider $\alpha\subseteq I_i\cup I_j$. Then let $R_\alpha$ contain exactly those sets $U$ from $R$ for which $U\cap (I_i \cup I_j) =\alpha$. Then clearly $R_{\alpha} \cap R_{\alpha'} = \emptyset$ for $\alpha \neq \alpha'$. Note that $R_\alpha$ is both a $(\Pi_0, \Pi_1)$-rectangle and a $(\Gamma_0, \Gamma_1)$-rectangle since it is fully determined on $I_i\cup I_j$ and thus on the difference between the two partitions. Moreover, we have
    \begin{align*}
        R = \bigcup_{\alpha\subseteq I_i\cup I_j} R_\alpha.
    \end{align*}
    Since $|I_i\cup I_j| = 8$ and thus the disjoint union ranges over $2^8=256$ sets, the claim follows.
\end{proof}

Neat ordered partitions have the following useful properties.

\begin{lemma} \label{lem:ordered_par}
Let $(\Pi_0, \Pi_1)$ be a neat, ordered and balanced partition such that $|\Pi_0| \le |\Pi_1|$. Then
\begin{enumerate}
\item $\Pi_0 \subseteq V_G$ and
\item $|\Pi_0| = |G|$. 
\end{enumerate}
\end{lemma}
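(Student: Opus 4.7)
The plan is to derive both parts of the lemma from a single structural fact: under the hypothesis $|\Pi_0|\le|\Pi_1|$, no index $i\in[n]$ can have both $x_i$ and $y_i$ in $\Pi_0$. Since $|\Pi_0|+|\Pi_1|=2n$, the hypothesis is equivalent to $|\Pi_0|\le n$ and $|\Pi_1|\ge n$. Because the partition is ordered, there is an interval $[a,b]\subseteq[1,2n]$ with either $\Pi_0=Z[a,b]$ or $\Pi_1=Z[a,b]$, and I would split on these two cases.

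If $\Pi_0=Z[a,b]$ and some $i\in[n]$ has $x_i,y_i\in\Pi_0$, then using $x_i=z_i$ and $y_i=z_{n+i}$ we get $a\le i$ and $n+i\le b$, so $b-a\ge n$ and $|\Pi_0|\ge n+1$, contradicting $|\Pi_0|\le n$. If instead $\Pi_1=Z[a,b]$, then $|\Pi_1|=b-a+1\ge n$ forces $a\le n+1$, and $\Pi_0=Z[1,a-1]\cup Z[b+1,2n]$. Suppose $x_i,y_i\in\Pi_0$; I would split into the four subcases given by the signs of $i-a$ and $(n+i)-b$. The combination $i>b$ and $n+i<a$ gives the impossible chain $n+i<a\le b<i$. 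The combination $i<a$ and $n+i<a$ gives $a>n+i\ge n+1$, contradicting $a\le n+1$. The remaining two combinations ($i<a$ with $n+i>b$, or $i>b$ with $n+i>b$) each yield $b\le n+a-2$ or $b\le n-1$ respectively, forcing $|\Pi_1|\le n-1$ and contradicting $|\Pi_1|\ge n$.

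With the structural fact in hand, Claim~(1) is immediate: any $x_i\in\Pi_0$ forces $y_i\in\Pi_1$ and therefore $i\in G$ and $x_i\in V_G$, and symmetrically for $y_i\in\Pi_0$; hence $\Pi_0\subseteq V_G$. For Claim~(2), I would define $\phi\colon G\to\Pi_0$ by sending $i\in G$ to the unique element of $\{x_i,y_i\}\cap\Pi_0$; this is well-defined because $i\in G$ means exactly one of $x_i,y_i$ lies in $\Pi_0$, and injective because distinct $i$ have disjoint pairs $\{x_i,y_i\}$. Surjectivity is exactly Claim~(1), since every $z\in\Pi_0$ is some $x_i$ or $y_i$ with $i\in G$ by that claim. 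Hence $\phi$ is a bijection and $|G|=|\Pi_0|$.

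The only real work here is the subcase bookkeeping in the complement case, which I do not expect to be a genuine obstacle—each subcase dissolves into one or two inequalities on the endpoints $a,b$. I also note that the \emph{neatness} hypothesis is not actually used in this proof; it is presumably included in the statement for uniformity with the subsequent lemmas, where rectangles on $(\Pi_0,\Pi_1)$ need to respect the intervals $I_\ell$.
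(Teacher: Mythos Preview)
Your proof is correct and follows essentially the same approach as the paper: both arguments reduce to the structural fact that $\Pi_0$ cannot contain both $x_i$ and $y_i$ for any $i$, established by a case split on whether $\Pi_0$ or $\Pi_1$ equals the defining interval, and then derive (1) and (2) from this fact (the paper phrases (2) as a pair of inequalities rather than an explicit bijection, but the content is identical). Your remark that neatness is not actually used is correct and a nice observation; the paper does not point this out.
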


\begin{proof}
Let $[i,j]$ be the interval which induces the partition. First, suppose $\Pi_0 = Z[i,j]$. Since $|\Pi_0| \le n$ it follows that for every two elements $z_k, z_\ell$ of $\Pi_0$ the indices have distance at most $n-1$, i.e., if $z_k, z_{\ell} \in \Pi_0$ then $|\ell-k| \le n-1$. Since the distance between $x_{\ell}$ and $y_{\ell}$ is $n$ for every $\ell$, it follows that $\Pi_0$ cannot contain both $x_\ell$ and $y_\ell$ and thus $\Pi_0 \subseteq V_G$. 

If instead $\Pi_0 = Z \setminus Z[i,j]$, then 
\begin{align*}
\Pi_0 = \{z_{\ell} \in Z \, \mid \, \ell < i \} \cup \{z_{\ell} \in Z \, \mid \, \ell > j\}.
\end{align*}
So since $|Z[i,j]| \ge n$, we know that $j-i \ge n-1$ and it is easy to see that no two elements of $\Pi_0$ are at distance $n$. So again $\Pi_0 \subseteq V_G$; we have shown (1). 

For (2), observe that for every $\ell \in G$ we must have either $x_{\ell}$ or $y_{\ell}$ in $\Pi_0$, so $|G| \le |\Pi_0|$. Moreover, from (1) we get that for every $\ell \in G$ the set $\Pi_0$ cannot contain $x_{\ell}$ and $y_{\ell}$, so $|\Pi_0| \le |G|$ and thus $|\Pi_0| = |G|$. 
\end{proof}

We can now deploy Corollary~\ref{cor: disc} and Lemma~\ref{lem:ordered_par} to prove a generalisation of Lemma~\ref{lem: simple_des}.

\begin{lemma} \label{lem: gen_disc}
Let $(\Pi_0, \Pi_1)$ be a neat, ordered and balanced partition and let $R = S \times T$ be a $(\Pi_0, \Pi_1)$-rectangle.
Then 
\[ \big||R \cap A| - |R \cap B|\big| \le 2^{10m/3}. \]
\end{lemma}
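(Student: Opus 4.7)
The plan is to mimic the discrepancy argument in the proof of Lemma~\ref{lem: simple_des}, keeping careful track of which random block-variables land on which side of $(\Pi_0,\Pi_1)$. By neatness each interval $I_i$ lies entirely in one part, so for each block index $i\in[m]$ the pair $(I_i^X,I_i^Y)$ falls into one of the four classes
\[
J_{ab}:=\{i\in[m]\mid I_i^X\subseteq\Pi_a,\ I_i^Y\subseteq\Pi_b\},\qquad a,b\in\{0,1\}.
\]
Blocks in $J_{00}\cup J_{11}$ are ``same-side'' and blocks in $J_{01}\cup J_{10}$ are ``cross-side''. Crucially, $i\in J_{00}\cup J_{11}$ forces every index $\ell$ of that block to have $x_\ell,y_\ell$ on the same side, i.e.\ $\ell\notin G$, whereas $i\in J_{01}\cup J_{10}$ forces the four indices of the block to lie in $G$. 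Hence $|G|=4(|J_{01}|+|J_{10}|)$, and applying Lemma~\ref{lem:ordered_par} together with the balancedness $|\Pi_0|\ge 2n/3$ gives $|J_{01}|+|J_{10}|\ge 2m/3$.

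Next I would rewrite the signed count. Let $(X,Y)$ be uniform over $\Lset$ with the independent coordinates $X_i,Y_i$ from the proof of Lemma~\ref{lem: simple_des}, and set $W_0:=(X\cup Y)\cap\Pi_0$ and $W_1:=(X\cup Y)\cap\Pi_1$; the independence of the $X_i,Y_i$ makes $W_0$ and $W_1$ independent. Because $R=S\times T$ with $S\subseteq\mathcal{P}(\Pi_0)$ and $T\subseteq\mathcal{P}(\Pi_1)$ we have $\chi_R(X\cup Y)=\chi_S(W_0)\chi_T(W_1)$. Decompose
\[
\sum_{i=1}^m \Int(X_i,Y_i)=\Psi_0(W_0)+\Psi_1(W_1)+\Phi(W_0,W_1),
\]
where $\Psi_0$ gathers the contributions from $J_{00}$, $\Psi_1$ from $J_{11}$, and $\Phi$ from $J_{01}\cup J_{10}$. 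Absorbing the signs, set $f(W_0):=\chi_S(W_0)(-1)^{\Psi_0(W_0)}$ and $g(W_1):=\chi_T(W_1)(-1)^{\Psi_1(W_1)}$, both bounded by $1$ in absolute value.

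Now I would run the same Cauchy--Schwarz step as in the proof of Lemma~\ref{lem: simple_des} but with $W_0,W_1$ in place of $X,Y$, introducing an independent copy $W_1'$ of $W_1$:
\[
\Bigl(\tfrac{|R\cap A|-|R\cap B|}{2^{4m}}\Bigr)^{2}\le \E_{W_1,W_1'}\bigl|\E_{W_0}\!\left[(-1)^{\Phi(W_0,W_1)+\Phi(W_0,W_1')}\right]\bigr|.
\]
The inner expectation factorises into independent per-block expectations. For $i\in J_{01}$ the variable $X_i$ is integrated out against fixed $Y_i,Y_i'$, and the same elementary computation as in Lemma~\ref{lem: simple_des} shows the factor equals $1$ when $Y_i=Y_i'$ and $0$ otherwise; symmetrically for $i\in J_{10}$ with the roles of $X$ and $Y$ swapped. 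Therefore
\[
\Bigl(\tfrac{|R\cap A|-|R\cap B|}{2^{4m}}\Bigr)^{2}\le \prod_{i\in J_{01}}\Pr[Y_i=Y_i']\cdot\prod_{i\in J_{10}}\Pr[X_i=X_i']=4^{-(|J_{01}|+|J_{10}|)}.
\]
Combining this with $|J_{01}|+|J_{10}|\ge 2m/3$ yields $\bigl||R\cap A|-|R\cap B|\bigr|\le 2^{4m}\cdot 2^{-2m/3}=2^{10m/3}$, as required.

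The main obstacle is the bookkeeping in step three: one must verify that $\chi_R$ really does factor into functions of the independent variables $W_0$ and $W_1$ (which is exactly where neatness is used, so that each interval contributes cleanly to only one side), and that after absorbing the same-side interactions into $f$ and $g$, the residual $\Phi$ involves only the cross-side blocks. Once this is set up properly, the remainder is essentially the same per-block cancellation calculation already carried out in Lemma~\ref{lem: simple_des}, and the geometric input from Lemma~\ref{lem:ordered_par} provides the required count of cross-side blocks.
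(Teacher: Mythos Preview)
Your proof is correct and reaches the same bound, but it follows a genuinely different route from the paper's.

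The paper argues by \emph{conditioning}: it fixes the restriction $\alpha$ of $T$ to the ``bad'' part $\Pi_1^b$ (the elements $x_\ell,y_\ell$ that both lie in $\Pi_1$), observes that what remains lives entirely in $V_G$, and then invokes Corollary~\ref{cor: disc} on each slice $R^\alpha=S\times T^\alpha$, viewed after renaming as a rectangle for the smaller instance $L_{|G|}$. The final bound comes from summing $2^{3|G|/4}$ over at most $2^{n-|G|}$ admissible choices of $\alpha$ (this last count is where the hypothesis $R\subseteq\Lset$ enters).

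You instead rerun the Cauchy--Schwarz discrepancy argument of Lemma~\ref{lem: simple_des} \emph{directly} with respect to $(\Pi_0,\Pi_1)$: you split the blocks into same-side ($J_{00},J_{11}$) and cross-side ($J_{01},J_{10}$), absorb the same-side sign contributions into bounded multipliers $f,g$, and show that after duplicating $W_1$ the inner expectation vanishes unless the cross-side coordinates agree. Lemma~\ref{lem:ordered_par} then supplies $|J_{01}|+|J_{10}|\ge 2m/3$, giving the bound.

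What each approach buys: the paper's argument is modular---it reuses Corollary~\ref{cor: disc} as a black box---but pays for this with the ``renaming'' reduction, which requires checking that after restricting to $V_G$ one really does land in the setting of Corollary~\ref{cor: disc} (in particular that the intervals survive intact, which is where neatness is invoked). Your argument is more self-contained: it never leaves the original index set, handles the same-side blocks by absorption rather than by conditioning, and does not need the hypothesis $R\subseteq\Lset$. The cost is that you redo the Cauchy--Schwarz calculation rather than cite it.
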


\begin{proof}
Suppose w.l.o.g.\! that $|\Pi_0| \le |\Pi_1|$. Note that, since the partition is balanced, we have $|\Pi_0| \ge 2n/3$.
Define $\Pi_1^{g} := \Pi_1 \cap V_G$ and $\Pi_1^{b} := \Pi_1 \setminus \Pi_1^{g}$. Observe that, via Lemma~\ref{lem:ordered_par}, we get $|\Pi_0| = |\Pi_1^g|$ and thus  $|\Pi_1^{b}| = 2n - |\Pi_0| - |\Pi_1^g| = 2n - 2|\Pi_0| \le 2n/3$.

Now let $\alpha \subseteq \Pi_1^{b}$ be a subset of the `bad elements', i.e.~the elements $x_\ell, y_\ell$ that both lie in $\Pi_1$. Let $T_{\alpha}: = \{U \in T \, \mid \, U \cap \Pi_1^b = \alpha\}$ and $T^{\alpha} = \{ U \setminus \alpha \, \mid \, U \in T_{\alpha}\}$. Define $R^{\alpha} = S \times T^{\alpha}$. By construction, every element of $S \times T^{\alpha}$ is a subset of $V_G$. By renaming elements we can view $S \times T^{\alpha}$ as a subset of $L_{|G|}$, since all elements in $\Pi_0$ are in $V_G$ by Lemma~\ref{lem:ordered_par}(1). 
Then for every $\alpha \in U$, since $|G| = |\Pi_0|$---by Lemma~\ref{lem:ordered_par}(2)---and by applying Corollary~\ref{cor: disc} to $R^{\alpha}$ with $m' := |G|/4$, we obtain that
\begin{align*} \big||R^{\alpha} \cap A| - |R^{\alpha} \cap B|\big| \le 2^{3m'} = 2^{\frac{3|G|}{4}}.\end{align*}
Here we use that our partition is neat and thus that $G$ is a union of $m'$ many intervals.

Moreover, because $(\Pi_0, \Pi_1)$ is neat, $\Pi_1^{b}$ is  
a union of intervals. Suppose there is some $\ell$ such 
that $I_{\ell} \subseteq \Pi_1^{b}$ and $|\alpha \cap 
I_{\ell}| \neq 1$. Then $\alpha \not \in \Lset$ and thus 
$T_{\alpha} \cap \Lset = \emptyset$. Let $U \subseteq 
\Lset$ be the set of $\alpha \subseteq \Pi_1^{b}$ such 
that $T_{\alpha} \cap \Lset \neq \emptyset$. Then for all 
$\alpha \in U$ we have that $|\alpha \cap I_\ell| = 1$, 
for every $\ell$ such that $I_{\ell} \subseteq \Pi_1^{b}$. 
We obtain that
\begin{align*}
|U| \le 4^{\frac{|\Pi_1^b|}{4}} = 4^{\frac{2n-2|G|}{4}} = 2^{n-|G|}.
\end{align*}
\noindent It follows that 
\begin{align*}
\big||R \cap A| - |R \cap B|\big| &\le \sum_{\alpha \in U} \big||R^{\alpha} \cap A| - |R^{\alpha} \cap B|\big| \\
&\le 2^{n - |G|} \cdot 2^{\frac{3|G|}{4}} =  2^{n-\frac{|G|}{4}} \le 2^{\frac{5n}{6}} = 2^{\frac{10m}{3}}
\end{align*}
as required, where we use that $|G| = |\Pi_0| \ge 2n/3$.
\end{proof}

Proposition~\ref{prop: lb} now follows by a routine calculation similar to what we have already seen.

\begin{proof}[Proof of Proposition~\ref{prop: lb}]
We assume first, as before, that $n$ is divisible by four, so we can apply all results from Section~\ref{sec: lb}.
Let $R_1, \dots, R_\ell$ be an ordered disjoint cover of $L_n$. With Lemma~\ref{lem:makeneat}, we can assume that the partitions for all rectangles are neat, since this only changes the number of rectangles by a constant factor. Then
\begin{align*}
2^{\frac{7m}{2}} &< |A \cap L_n| - |B \cap L_n| \\
 &= \sum_{i=1}^\ell |A \cap R_i| - |B \cap R_i| \\ 
&\le  \ell \cdot 2^{\frac{10m}{3}}. 
\end{align*}
Where the first equality is Lemma~\ref{lem: sizes}\eqref{eq:ABsize}, the second uses the fact the $\{R_i\}_{i \in [\ell]}$ is a disjoint cover of~$L_n$ and the inequality follows by Lemma~\ref{lem: gen_disc}. Since $\frac{7}{2} >\frac{10}{3}$, we obtain that  $\ell = 2^{\Omega(m)} = 2^{\Omega(n)}$.

Now assume that $n$ is not divisible by four, so $n=4t+a$ for some $a\in [3]$.
We call the elements of the form $x_{4t+b}, y_{4t+b}$, where $b \in [a]$ the \emph{spare elements}. For each $i\in [\ell]$, let $R_i'$ contain exactly those sets $U$ from $R_i$, that do not contain any spare elements. Then clearly $\bigcup_{i \in \ell} R_i' = L_{4t}$ and moreover, the union is disjoint. Suppose $R_i$ is a $(\Pi_0, \Pi_1)$-rectangle. Then for $i \in \{0,1\}$ define $\Pi_i'$ to be the set obtained from $\Pi_i$ by removing all the spare elements. It follows that $R_i'$ is also a $(\Pi_0', \Pi_1')$ rectangle. The only remaining problem is that $R_i'$ might not be balanced.
But, by essentially the same argument as that used in Lemma~\ref{lem:makeneat}, there is an ordered balanced partition $(\Gamma_0, \Gamma_1)$ and a set of disjoint $(\Gamma_0, \Gamma_1)$-rectangles $T_1, \dots, T_k$ with $k \le 2^6 =64$ such that $R_i' = \bigcup_{j \in [k]} T_j$. We thus obtain an ordered disjoint cover of $L_{4t}$ of size at most $64\ell$. It follows that $\ell = 2^{\Omega(4t)} = 2^{\Omega(n)}$. 
\end{proof}

Combining Proposition~\ref{prop: lb} and Proposition~\ref{prop:rectangles} directly yields Theorem~\ref{thm:lower} and thus Theorem~\ref{thm:main}.

	\section{Conclusions}\label{sec:conclusions}

We have shown an optimal double-exponential separation in terms of succinctness of {\CFG}s from their unambiguous variant. Notably, this separation holds for a natural language over a binary alphabet. To obtain our separation we introduced a lower bound technique based on rectangles (Proposition~\ref{prop:rectangles}), inspired by methods from knowledge compilation~\cite{BovaCMS16}. However, our argument has~a different flavour to these lower bounds and so opens up interesting prospects for future work.

In particular, our lower bound applies to an \emph{unambiguous} representation format. Dealing with such models poses a challenge and perhaps our techniques can be applied more widely to gain a better understanding of unambiguity. For example, in the context of \emph{document spanners} it has been shown that \emph{rigid} grammars can be made unambiguous in exponential time \cite[Theorem 5]{DBLP:conf/pods/AmarilliJMR22}; can we show that this is optimal? Closer to the topic of our paper, can we prove that complementation is hard for {\uCFG}s over finite languages? We should note that studying the complement operation for unambiguous representation formats can be challenging. The case of unambiguous finite automata \cite{DBLP:conf/icalp/Raskin18,DBLP:conf/icalp/GoosK022} and structured d-DNNF circuits \cite{DBLP:conf/ijcai/Vinall-Smeeth24} have only recently been solved, and even then the lower bounds are only quasi-polynomial. Moreover, the case of d-DNNF circuits has been open for over two decades. We hope that this paper marks a step towards a better understanding of unambiguity.

	\begin{acks}
The authors thank Florent Capelli for helpful discussions. The first author would also like to thank the Simons Institute Fall 2023 program `Logic and Algorithms in Database
Theory and AI'. Here, via discussions with Wim Martens, he learned about the conjecture from~\cite{DBLP:journals/corr/abs-2309-11663} that is solved in this paper.

The first author was partially supported by the \grantsponsor{ANR}{Agence nationale de la recherche (ANR)}{https://anr.fr/} project \grantnum{ANR}{EQUUS ANR-19-CE48-0019}. The second author was funded by the \grantsponsor{DFG}{Deutsche Forschungsgemeinschaft (DFG, German Research Foundation)}{https://www.dfg.de/}  project number \grantnum{DFG}{414325841}. 

\end{acks}

	\newpage

	\bibliography{../extras/ordered}
	
	\newpage
	\appendix

\section{\texorpdfstring{ A small {\CFG} for $L_n$}{A small CFG for Ln}} \label{ap:smallCFG}

Fix an integer $n \in \mathbb{N}$. The idea will be to construct a word $w$ of length $n-1$ and then add another word $aw'a$ where $w'$ also has length $n-1$ at some position into $w$.

First, it will be useful to create words of specific lengths. To this end, for all $i$ with $2^i < n$, we introduce a non-terminal $B_i$ that generates all possible words of length $2^i$ by the rules
\begin{align*}
    B_i &\to B_{i-1}B_{i-1}  & \text{ for } i \text{ with } 1 \le 2^i  < n\\
    B_0 &\to a \mid b
\end{align*}  

From the binary representation of $n-1$, we can compute a set $I= \{i_1, \ldots, i_\ell\}$ of integers of size $O(\log(n))$ such that $n-1 = \sum_{i\in I}2^i$. Then we can imagine that $w$ is decomposed into blocks of length $2^i$ for $i\in I$, that is $w \in \{a,b\}^{2^{i_1}} \times \ldots \times \{a,b\}^{2^{i_\ell}}$. Creating the blocks of $w$ is easy with the $B_i$, but we still have to choose in which block we want to add $aw'a$. To this end, we introduce a binary tree $T=(V,E)$ whose leaves are bijectively labelled with the elements from $I$. For every node $v$ of $T$ we introduce two non-terminals $C_v, D_v$ where intuitively in a parse tree we will use $C_v$ if $aw'a$ is added in a subword corresponding to the blocks of the leaf labels below $v$ and $D_v$ otherwise. Concretely, for $v$ with children $u,w$, we add the rules
\begin{align*}
    C_v & \to C_u D_w \mid D_u C_w\\
    D_v & \to D_uD_w.
\end{align*}

For the leaves $v$, there are two cases: first we add for all leaves $v$ and all $i_j$ such that the label of $v$ is $i_j$  the rule
\begin{align*}
    D_v & \to B_{i_j} 
\end{align*} 
to generate all words of length $2^{i_j}$ from symbol $D_v$ (since there we do not want to add $aw'a$). 

For the $C_v$, we want to create all words of length $2^{i_j}$ and insert $aw'a$ at some position. We proceed similarly to Example~\ref{ex: ambig}. For all leaves $v$ and all $i_j$ such that the label of $v$ is $i_j$ we add the rule
\begin{align*}
    C_v & \to A_{I_j}.
\end{align*}
Then, we introduce symbols $A_i$ for all $i$ with $2^i < n$ and rules
\begin{align*}
    A_i & \to B_{i-1}A_{{i-1}} & \text{ for $i$ with } 2^i < n\\ 
    A_0 & \to B_0 a S a \mid a S a B_0,
\end{align*}
where $S$ is a new symbol from which we create $w'$ by the rule
\begin{align*}
    S & \to B_{i_1}\ldots B_{i_\ell}.
\end{align*}

Taking $C_r$, where $r$ is the root of $T$, as the starting symbol completes the construction of the grammar. By the discussion above, it generates $L_n$. So it only remains to show that the size of the grammar is $O(\log(n))$. 

First, observe that all rules but the last---which is of size $|I|=O(\log(n))$---have constant size. So it suffices to bound the number of rules. Noting further, that for every non-terminal $T$ we have only a constant number of rules $T\to W$, we can in fact just bound the number of non-terminals. By construction, we have $O(\log(n))$ non-terminals $A_i$ and $B_i$. Moerover, we have $O(|V(T)|) = O(|I|) = O(\log(n))$ non-terminals $C_v$ and $D_v$. The only additional non-terminal is the single $S$, which completes the proof.

\end{document}